  \def\acknowledgement{\par\addvspace{17pt}\small\rmfamily
  \trivlist\if!\ackname!\item[]\else
  \item[\hskip\labelsep
  {\bfseries\ackname}]\fi}
  \newenvironment{acknowledgements}{\begin{acknowledgement}}
  {\end{acknowledgement}}
  \def\trans@english{\switcht@albion}
  \def\switcht@albion{\def\ackname{Acknowledgements}%
  }\switcht@albion
\def\keywordname{{\bfseries Keywords}}%
\def\keywords#1{\par\addvspace\medskipamount{\rightskip=0pt plus1cm
\def\and{\ifhmode\unskip\nobreak\fi\ $\cdot$
}\keywordname\enspace\ignorespaces#1\par}}
  \renewcommand*{\bm}[1]{#1}%
\def\email#1{\tt#1}
\newcommand{\tl}{\textnormal}
\newcommand{\rank}{\textrm{Rank}}
\newcommand{\supp}{\textnormal{Supp}_R}
\newcommand{\aut}{\tl{Aut}_{\mathbb{F}_{q^m}/\mathbb{F}_q}}
\newtheorem{theorem}{Theorem}
\newtheorem{corollary}{Corollary}
\newtheorem{proposition}{Proposition}
\theoremstyle{remark}
\newtheorem{definition}[theorem]{Definition}
\newtheorem{remark}{Remark}
\date{}
\begin{document}

\title{Semilinear Transformations in Coding Theory: A New Technique in Code-Based Cryptography}

\author{
Wenshuo Guo\textsuperscript{(\Letter)}\ and Fang-Wei Fu\\ \\
Chern Institute of Mathematics and LPMC, Nankai University, Tianjin 300071, China\\
\email{ws\_guo@mail.nankai.edu.cn, fwfu@nankai.edu.cn}
}

\maketitle

\begin{abstract}
This paper presents a new technique for disturbing the algebraic structure of linear codes in code-based cryptography. This is a new attempt to exploit Gabidulin codes in the McEliece setting and almost all the previous cryptosystems of this type have been completely or partially broken. To be specific, we introduce the so-called semilinear transformation in coding theory, which is defined through an $\mathbb{F}_q$-linear automorphism of $\mathbb{F}_{q^m}$, then apply them to construct a public key encryption scheme. Our analysis shows that this scheme can resist all the existing distinguisher attacks, such as Overbeck's attack and Coggia-Couvreur attack. Meanwhile, we endow the underlying Gabidulin code with the so-called partial cyclic structure to reduce the public key size. Compared with some other code-based cryptosystems, our proposal has a much more compact representation of public keys. For instance, 2592 bytes are enough for our proposal to achieve the security of 256 bits, almost 403 times smaller than that of Classic McEliece entering the third round of the NIST PQC project.
\keywords{Post-quantum cryptography \and Code-based cryptography \and Rank metric codes \and Gabidulin codes \and Partial cyclic codes \and Semilinear transformations}
\end{abstract}

\section{Introduction}
Over the past decades, post-quantum cryptosystems (PQCs) have been drawing more and more attention from the cryptographic community. The most important advantage of PQCs is their potential resistance against attacks from quantum computers. In post-quantum cryptography, cryptosystems based on coding theory are one of the most promising candidates. In addition to security in the future quantum era, these cryptosystems generally have fast encryption and decryption procedures. Code-based cryptography has quite a long history, nearly as old as RSA--one of the best known public key cryptosystems. However, this family of cryptosystems has never been used in practical situations for the reason that it requires large memory for public keys. For instance, Classic McEliece \cite{daniel2020classic} submitted to the NIST PQC project has a public key size of 255 kilobytes for the 128-bit security. To overcome this drawback, a variety of improvements for McEliece's original scheme \cite{mceliece1978public} have been proposed one after another. Generally these improvements can be divided into two categories: one is to substitute Goppa codes used in the McEliece system with other families of codes endowed with special structures, the other is to use codes endowed with the rank metric. However, most of these variants have been shown to be insecure against structural attacks.

The first cryptosystem based on rank metric codes, known as the GPT cryptosystem, was proposed by Gabidulin et al. in \cite{gabidulin1991ideals}. The main advantage of rank-based cryptosystems consists in their compact representation of public keys. For instance, 600 bytes are enough to reach the 100-bit security for the original GPT cryptosystem. After that, applying rank metric codes to the construction of cryptosystems became an important topic in code-based cryptography. Some of the representative variants based on Gabidulin codes can be found in \cite{gabidulin2003reducible,berger2004designing,faure2005new,loidreau2017new,lau2019new}. Unfortunately, most of these variants, including the original GPT cryptosystem, have been completely broken because of Gabidulin codes being highly structrued. Concretely, Gabidulin codes contain a large subspace invariant under the Frobenius transformation, which provides the feasibility for us to distinguish Gabidulin codes from general ones. Based on this observation, various structural attacks \cite{overbeck2008structural,horlemann2018extension,Otmani2018Improved,gaborit2018polynomial,coggia2020security} on the GPT cryptosystem and some of their variants were designed. Apart from Gabidulin codes, another family of rank metric codes, known as the Low Rank Parity Check (LRPC) codes, and a probabilistic encryption scheme based on these codes were proposed in \cite{gaborit2013low,aragon2019low}. Compared to Gabidulin codes, LRPC codes admit a weak algebraic structure. Encryption schemes based on these codes can therefore resist structural attacks designed for Gabidulin codes based cryptosystems. However, this type of cryptosystems generally has a decrypting failure rate, which can be used to devise a reaction attack \cite{samar2019reaction} to recover the private key.

\noindent\textbf{Our contributions} in this paper mainly consist of the following two aspects. 
\begin{itemize}
\item[(1)]We introduce and investigate the so-called semilinear transformations in coding theory. In particular, a semilinear transformation over $\mathbb{F}_{q^m}$ with respect to $\mathbb{F}_q$ is said to be fully linear if it preserves the $\mathbb{F}_{q^m}$-linearity of all linear codes over $\mathbb{F}_{q^m}$. A sufficient and necessary condition for a semilinear transformation being fully linear is given. Furthermore, a fully linear transformation is shown to be a composition of the Frobenius transformation and the stretching transformation. Lastly, we show that a semilinear transformation can be characterized by a linearized permutation polynomial. Meanwhile, we introduce the concept of nonlinearity of a semilinear transformation and prove that the nonlinearity is actually determined by the nonzero coefficients of the associated linearized permutation polynomial.

\item[(2)]We apply the semilinear transformation to construct a public key encryption scheme. Combining Loidreau's technique of disturbing Gabidulin codes, both the public code and its dual in our proposal seem indistinguishable from random codes, which has been verified through extensive experiments in MAGMA. Consequently, all the known distinguisher attacks do not work any longer. To reduce the public key size, we endow the underlying Gabidulin code with the so-called partial cyclic structure. Finally, we obtain a public key cryptosystem with the optimal public key representation compared with some other code-based cryptosystems.
\end{itemize}

The rest of this paper is organized as follows. Section \ref{section2} introduces basic notations used throughout this paper, as well as the definition of Gabidulin codes and partial cyclic codes. Section \ref{section3} presents two hard problems in coding theory and two types of attacks on them that will be useful to estimate the practical security of our proposal. In Section \ref{section4}, we introduce the concept of semilinear transformations, and investigate their algebraic properties when acting on linear codes. Section \ref{section5} is devoted to the description of our new proposal and some notes on the choice of private keys, then we present the security analysis of our proposal in Section \ref{section6}. After that, we suggest parameters for different security levels and make a comparison on public key size with other code-based cryptosystems in Section \ref{section7}. A few concluding remarks will be presented in Section \ref{section8}.

\section{Preliminaries}\label{section2}
We first present notations used throughout this paper, as well as basic concepts of linear codes and rank metric codes. Then we introduce the so-called partial cyclic Gabidulin codes and some related results.
\subsection{Notation and basic concepts}
Let $\mathbb{F}_q$ be a finite field, and $\mathbb{F}_{q^m}$ be an extension field of $\mathbb{F}_q$ of degree $m$. A vector $\bm{a}\in\mathbb{F}_{q^m}^m$ is called a basis vector of $\mathbb{F}_{q^m}$ over $\mathbb{F}_q$ if components of $\bm{a}$ are linearly independent over $\mathbb{F}_q$. Particularly, we call $\alpha$ a polynomial element if $\bm{a}=(1,\alpha,\cdots,\alpha^{m-1})$ forms a basis vector of $\mathbb{F}_{q^m}$ over $\mathbb{F}_q$, and call $\alpha$ a normal element if $\bm{a}=(\alpha,\alpha^q,\cdots,\alpha^{q^{m-1}})\in\mathbb{F}_{q^m}^m$ forms a basis vector. For two positive integers $k$ and $n$, denote by $\mathcal{M}_{k,n}(\mathbb{F}_q)$ the space of all $k\times n$ matrices over $\mathbb{F}_q$, and by $\tl{GL}_n(\mathbb{F}_q)$ the set of all invertible matrices in $\mathcal{M}_{n,n}(\mathbb{F}_q)$. For a matrix $M\in\mathcal{M}_{k,n}(\mathbb{F}_q)$, let $\langle M\rangle_q$ be the vector space spanned by the rows of $M$ over $\mathbb{F}_q$.

An $[n,k]$ linear code $\mathcal{C}$ over $\mathbb{F}_{q^m}$ is a $k$-dimensional subspace of $\mathbb{F}_{q^m}^n$. The dual code of $\mathcal{C}$, denoted by $\mathcal{C}^\perp$, is the orthogonal space of $\mathcal{C}$ under the usual inner product over $\mathbb{F}_{q^m}^n$. A $k\times n$ matrix $G$ over $\mathbb{F}_{q^m}$ of full row rank is called a generator matrix of $\mathcal{C}$ if its row vectors form a basis of $\mathcal{C}$. A generator matrix $H$ of $\mathcal{C}^\perp$ is called a parity-check matrix of $\mathcal{C}$. For a codeword $\bm{c}\in\mathcal{C}$, the Hamming support of $\bm{c}$, denoted by $\tl{Supp}_H(\bm{c})$, is defined to be the set of coordinates of $\bm{c}$ at which the components are nonzero. The Hamming weight of $\bm{c}$, denoted by $\tl{wt}_H(\bm{c})$, is the cardinality of $\tl{Supp}_H(\bm{c})$. The minimum Hamming distance of $\mathcal{C}$ is defined as the minimum Hamming weight of nonzero codewords in $\mathcal{C}$. The rank support of $\bm{c}$, denoted by $\supp(\bm{c})$, is the linear space spanned by the components of $\bm{c}$ over $\mathbb{F}_q$. The rank weight of $\bm{c}$ with respect to $\mathbb{F}_q$, denoted by $\tl{wt}_R(\bm{c})$, is defined to be the dimension of $\supp(\bm{c})$ over $\mathbb{F}_q$. For a matrix $M\in\mathcal{M}_{k,n}(\mathbb{F}_{q^m})$, the rank support $\tl{Supp}_R(M)$ of $M$ is defined to be the linear space spanned by entries of $M$ over $\mathbb{F}_q$. Similarly, the rank weight of $M$ with respect to $\mathbb{F}_q$, denoted by $\tl{wt}_R(M)$, is defined as the dimension of $\tl{Supp}_R(M)$ over $\mathbb{F}_q$.

\subsection{Gabidulin codes}
We first present the concept of Moore matrices and some related results.
\begin{definition}[Moore matrices]
For an integer $i$, we denote by $\alpha^{[i]}=\alpha^{q^i}$ the $i$-th Frobenius power of $\alpha\in\mathbb{F}_{q^m}$. By $\bm{g}^{[i]}$ we denote the component-wise $i$-th Frobenius power of $\bm{g}\in\mathbb{F}_{q^m}^n$. A matrix $G\in\mathcal{M}_{k,n}(\mathbb{F}_{q^m})$ is called a Moore matrix generated by $\bm{g}$ if the $i$-th row vector of $G$ is exactly $\bm{g}^{[i-1]}$ for $1\leqslant i\leqslant k$.
\end{definition}

\begin{remark}
For an $[n,k]$ linear code $\mathcal{C}\subseteq\mathbb{F}_{q^m}^n$, the $i$-th Frobenius power of $\mathcal{C}$ is defined to be $\mathcal{C}^{[i]}=\{\bm{c}^{[i]}:\bm{c}\in\mathcal{C}\}$. Furthermore, it is easy to verify that $\mathcal{C}^{[i]}$ is also an $[n,k]$ linear code over $\mathbb{F}_{q^m}$.
\end{remark}

The following proposition describes simple properties of Moore matrices.
\begin{proposition}
\label{proposition1}
\begin{itemize}
\item[\tl{(1)}] For two Moore matrices $A,B\in\mathcal{M}_{k,n}(\mathbb{F}_{q^m})$, $A+B$ is also a Moore matrix.
\item[\tl{(2)}] For a vector $\bm{a}\in\mathbb{F}_{q^m}^n$ and a matrix $Q\in \mathcal{M}_{n,l}(\mathbb{F}_q)$, let $A\in\mathcal{M}_{k,n}(\mathbb{F}_{q^m})$  be a Moore matrix generated by $\bm{a}$, then $AQ$ is a $k\times l$ Moore matrix generated by $\bm{a}Q$.
\item[\tl{(3)}] For positive integers $k\leqslant n\leqslant m$ and a vector $\bm{a}=(\alpha_1,\cdots,\alpha_n)\in\mathbb{F}_{q^m}^n$ with $\tl{wt}_R(\bm{a})=n$, let $A\in\mathcal{M}_{k,n}(\mathbb{F}_{q^m})$ be a Moore matrix generated by $\bm{a}$. Then $A$ has full row rank,  that is $\rank(A)=k$.
\item[\tl{(4)}] For a vector $\bm{a}\in\mathbb{F}_{q^m}^n$ with $\tl{wt}_R(\bm{a})=s$ where $s\leqslant n$ is a positive integer, let $A\in\mathcal{M}_{k,n}(\mathbb{F}_{q^m})$ be a Moore matrix generated by $\bm{a}$. Then we have $\rank(A)=\min\{s,k\}$.
\end{itemize}
\end{proposition}

Now we introduce the definition of Gabidulin codes.
\begin{definition}[Gabidulin codes]
For positive integers $k\leqslant n\leqslant m$ and $\bm{g}\in\mathbb{F}_{q^m}^n$ with $\tl{wt}_R(\bm{g})=n$. Let $G\in\mathcal{M}_{k,n}(\mathbb{F}_{q^m})$ be a Moore matrix generated by $\bm{g}$, then the $[n,k]$ Gabidulin code generated by $\bm{g}$ is defined to be the linear space $\langle G\rangle _{q^m}$.
\end{definition}

Gabidulin codes can be seen as an analogue of generalized Reed-Solomon (GRS) codes in the rank metric, both of which have pretty good algebraic properties. An $[n,k]$ Gabidulin code has minimum rank distance $d=n-k+1$ \cite{horlemann2015new} and can therefore correct up to $\left\lfloor\frac{n-k}{2}\right\rfloor$ rank errors in theory. Efficient decoding algorithms for Gabidulin codes can be found in \cite{gabidulin1985theory,loidreau2005welch,richter2004error}.

\subsection{Partial cyclic codes}
Lau and Tan \cite{lau2019new} proposed the use of partial cyclic codes to shrink public key size in rank metric based cryptography. Now we introduce this family of codes and present some related results.
\begin{definition}[Partial circulant matrices]
For a vector $\bm{m}\in\mathbb{F}_q^n$, the circulant matrix induced by $\bm{m}$ is a matrix $M\in\mathcal{M}_{n,n}(\mathbb{F}_q)$ whose first row is $\bm{m}$ and $i$-th row is obtained by cyclically right shifting its $i-1$-th row for $2\leqslant i\leqslant n$. For a positive integer $k\leqslant n$, the $k\times n$ partial circulant matrix generated by $\bm{m}$, denoted by $ \tl{PC}_k(\bm{m})$, is defined to be the first $k$ rows of $M$. Particularly, we denote by $ \tl{PC}_n(\bm{m})$ the circulant matrix generated by $\bm{m}$. Furthermore, we denote by $ \tl{PC}_n(\mathbb{F}_q)$ the set of all $n\times n$ circulant matrices over $\mathbb{F}_q$.
\end{definition}

\begin{remark}
Chalkley \cite{chalkley1975circulant} proved that $ \tl{PC}_n(\mathbb{F}_q)$ forms a commutative ring under usual matrix addition and multiplication. Let $\bm{1}=(1,0,\cdots,0)\in\mathbb{F}_q^n$, then $ \tl{PC}_k(\bm{m})= \tl{PC}_k(\bm{1})\cdot \tl{PC}_n(\bm{m})$ for any $\bm{m}\in\mathbb{F}_q^n$. It follows that for a $k\times n$ partial circulant matrix $A$ over $\mathbb{F}_q$ and $B\in \tl{PC}_n(\mathbb{F}_q)$, $AB$ is also a $k\times n$ partial circulant matrix.
\end{remark}

The following two propositions first describe a sufficient and necessary condition for a circulant matrix being invertible, and then make an accurate estimation on the number of invertible circulant matrices over $\mathbb{F}_q$.
\begin{proposition}\cite{otmani2010crypt}
For a vector $\bm{m}=(m_0,\cdots,m_{n-1})\in\mathbb{F}_q^n$, we define $\bm{m}(x)=\sum_{i=0}^{n-1}m_ix^i\in\mathbb{F}_q[x]$. A sufficient and necessary condition for $ \tl{PC}_n(\bm{m})$ being invertible is $\gcd(\bm{m}(x),x^n-1)=1$.
\end{proposition}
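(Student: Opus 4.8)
The plan is to recast the claim in terms of the quotient ring $R := \mathbb{F}_q[x]/(x^n-1)$, where it becomes the standard fact that a residue class is a unit precisely when one (hence any) polynomial representative is coprime to the modulus. First I would introduce the cyclic shift matrix $S := \mathcal{P}_n(\bm{e})$ with $\bm{e}=(0,1,0,\dots,0)$; a direct inspection of its powers shows $S^n = I_n$ and, more importantly, that for every $\bm{m}=(m_0,\dots,m_{n-1})\in\mathbb{F}_q^n$ one has $\mathcal{P}_n(\bm{m}) = \sum_{i=0}^{n-1} m_i S^i = \bm{m}(S)$, because the $(i,j)$ entry of $S^i$ is $1$ exactly when the indices match the cyclic pattern defining $\mathcal{P}_n(\bm{m})$. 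Hence the map $\Phi\colon R \to \mathcal{P}_n(\mathbb{F}_q)$ sending $f(x)+(x^n-1)$ to $f(S)$ is well defined (since $S^n-I_n=0$, the value $f(S)$ does not depend on the representative) and surjective (by the previous identity).

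Next I would check that $\Phi$ is a ring isomorphism of $R$ onto $\mathcal{P}_n(\mathbb{F}_q)$. Additivity is immediate; multiplicativity is the identity $f(S)g(S)=(fg)(S)$, valid because $S$ commutes with itself and the reduction modulo $x^n-1$ does not affect evaluation at $S$; alternatively one may simply invoke the Remark above identifying $\mathcal{P}_n(\mathbb{F}_q)$ with a commutative ring via the first-row map. Injectivity follows since the first row of $f(S)$ is the coefficient vector of the unique representative of $f$ of degree $<n$, so $\ker\Phi=\{0\}$ (equivalently, both sides are $\mathbb{F}_q$-vector spaces of dimension $n$ and $\Phi$ is onto). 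Therefore $\mathcal{P}_n(\bm{m})\in GL_n(\mathbb{F}_q)$ if and only if $\bm{m}(x)+(x^n-1)$ is a unit of $R$.

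Finally I would finish by the elementary fact that $\bm{m}(x)+(x^n-1)$ is a unit of $R$ if and only if $\gcd(\bm{m}(x),x^n-1)=1$: when the gcd is $1$, B\'ezout's identity in the principal ideal domain $\mathbb{F}_q[x]$ yields $u(x)\bm{m}(x)+v(x)(x^n-1)=1$, so $u(x)+(x^n-1)$ inverts $\bm{m}(x)+(x^n-1)$; conversely, any inverse produces such a B\'ezout relation, forcing every common divisor of $\bm{m}(x)$ and $x^n-1$ to divide $1$. The one point that needs a little care --- though it is entirely routine once $S^n=I_n$ is in hand --- is the verification that $\Phi$ respects multiplication and is injective. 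I also note that the more hands-on approach via simultaneous diagonalization of circulant matrices by a discrete Fourier matrix would require $\gcd(n,q)=1$, which fails whenever $n$ is even (here $\mathbb{F}_q$ has characteristic $2$); working inside $R$ sidesteps this issue and gives the result for all $n$.
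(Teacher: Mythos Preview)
Your argument is correct, but there is nothing in the paper to compare it against: the proposition is stated with a citation to \cite{otmani2010crypt} and is not proved in the paper itself. Your route via the ring isomorphism $\Phi\colon \mathbb{F}_q[x]/(x^n-1)\to\mathcal{P}_n(\mathbb{F}_q)$, $f\mapsto f(S)$, is the standard one and works over any field, in particular in characteristic $2$ where the DFT-diagonalization argument may fail, as you rightly observe.

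One small point worth making explicit in your write-up: from the isomorphism you conclude that $\mathcal{P}_n(\bm{m})$ is a unit \emph{of the subring} $\mathcal{P}_n(\mathbb{F}_q)$ if and only if $\bm{m}(x)+(x^n-1)$ is a unit of $R$, whereas the statement asks about invertibility in $GL_n(\mathbb{F}_q)$. The missing implication is that a circulant matrix invertible in $\mathcal{M}_{n,n}(\mathbb{F}_q)$ already has a circulant inverse. This follows at once since $R$ (hence $\mathcal{P}_n(\mathbb{F}_q)$) is a finite commutative ring, so any non-unit is a zero divisor; if $\bm{m}(S)$ were not a unit of $\mathcal{P}_n(\mathbb{F}_q)$ there would exist a nonzero $g(S)$ with $\bm{m}(S)g(S)=0$, contradicting $\bm{m}(S)\in GL_n(\mathbb{F}_q)$. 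Adding this one line closes the argument completely.
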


\begin{proposition}\cite{mullen2013handbook}\label{proposition2}
For a polynomial $f(x)\in\mathbb{F}_q[x]$ of degree $n$, let $g_1(x),\cdots,g_r(x)\in\mathbb{F}_q[x]$ be $r$ distinct irreducible factors of $f(x)$, i.e. $f(x)=\prod_{i=1}^rg_i(x)^{e_i}$ for some positive integers $e_1,\cdots,e_r$. Let $d_i=\deg(g_i(x))$ for $1\leqslant i\leqslant r$, then 
\begin{align}
\Phi_q(f(x))=q^n\prod_{i=1}^r(1-\frac{1}{q^{d_i}}),
\end{align}
where $\Phi_q(f(x))$ denotes the number of polynomials relatively prime to $f(x)$ of degree less than $n$.
\end{proposition}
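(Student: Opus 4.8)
The plan is to recognise $\Phi_q(f(x))$ as the order of the unit group of the quotient ring $R=\mathbb{F}_q[x]/(f(x))$, and then to compute that order by splitting $R$ into local pieces via the Chinese Remainder Theorem. First I would note that the polynomials of degree less than $n$ form a complete set of coset representatives for $R$, so $|R|=q^n$, and that a representative $a(x)$ is relatively prime to $f(x)$ precisely when its class is a unit in $R$ (if $\gcd(a,f)=1$, Bézout gives an inverse; conversely a common factor obstructs invertibility). Hence $\Phi_q(f(x))=|R^\times|$, and the whole problem reduces to counting units.

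Next, since $g_1(x),\dots,g_s(x)$ are distinct irreducibles, the powers $g_i(x)^{e_i}$ are pairwise coprime, so the factorization $f(x)=\prod_{i=1}^s g_i(x)^{e_i}$ yields a ring isomorphism $R\cong\prod_{i=1}^s R_i$ with $R_i=\mathbb{F}_q[x]/(g_i(x)^{e_i})$. A unit in a finite product of rings is exactly a tuple of units in the factors, so $|R^\times|=\prod_{i=1}^s|R_i^\times|$, and it remains to evaluate each $|R_i^\times|$.

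For the local factor $R_i$ I would argue that a class $\bar a\in R_i$ is a non-unit if and only if $g_i(x)\mid a(x)$: if $g_i\nmid a$ then $\gcd(a,g_i^{e_i})=1$ and $\bar a$ is a unit, while if $g_i\mid a$ then $\bar a$ is nilpotent, hence not a unit. The non-units are therefore the classes of the polynomials $g_i(x)\,b(x)$ with $\deg b<d_i(e_i-1)$, of which there are $q^{d_i(e_i-1)}$; since $|R_i|=q^{d_ie_i}$ this gives
\[
|R_i^\times|=q^{d_ie_i}-q^{d_i(e_i-1)}=q^{d_ie_i}\Bigl(1-\frac{1}{q^{d_i}}\Bigr).
\]
Multiplying over $i$ and using $\sum_{i=1}^s d_ie_i=\deg f(x)=n$ then yields $\Phi_q(f(x))=\prod_{i=1}^s q^{d_ie_i}(1-q^{-d_i})=q^n\prod_{i=1}^s(1-q^{-d_i})$, as claimed.

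The only delicate step is the local computation: one must justify carefully that the non-units of $R_i$ are exactly the classes divisible by $g_i(x)$ and that these number $q^{d_i(e_i-1)}$. Everything else — the identification with $|R^\times|$, the CRT decomposition, and the description of units in a product ring — is routine, so I expect no real obstacle beyond being precise about the nilpotent/invertible dichotomy in $R_i$.
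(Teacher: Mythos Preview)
Your argument is correct and is exactly the standard route: identify $\Phi_q(f)$ with the order of $(\mathbb{F}_q[x]/(f))^\times$, split via CRT into the local rings $\mathbb{F}_q[x]/(g_i^{e_i})$, and count units there by observing that the non-units form the maximal ideal $(g_i)$ of size $q^{d_i(e_i-1)}$. Note, however, that the paper does not supply its own proof of this proposition: it is stated with a citation to \cite{mullen2013handbook} and used as a quoted fact, so there is no in-paper argument to compare yours against. Your proposal stands on its own as a complete and standard justification.
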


Now we introduce the so-called partial cyclic codes.
\begin{definition}[Partial cyclic codes]
For a vector $\bm{a}\in\mathbb{F}_q^n$, let $G= \tl{PC}_k(\bm{a})$ be a partial circulant matrix induced by $\bm{a}$. An $[n,k]$ linear code $\mathcal{C}=\langle G\rangle_q$ is called a partial cyclic code generated by $\bm{a}$.
\end{definition}

\begin{remark}\label{remark1}
For a normal element $g\in\mathbb{F}_{q^n}$ with respect to $\mathbb{F}_q$, let $\bm{g}=(g^{[n-1]},g^{[n-2]},\cdots,g)$ and $G= \tl{PC}_k(\bm{g})$. Easily it can be verified that $G$ is a $k\times n$ Moore matrix. An $[n,k]$ linear code $\mathcal{G}=\langle G\rangle_{q^n}$ is called a partial cyclic Gabidulin code generated by $\bm{g}$.
\end{remark}

As for the total number of $[n,k]$ partial cyclic Gabidulin codes over $\mathbb{F}_{q^n}$, or equivalently the total number of normal elements of $\mathbb{F}_{q^n}$ over $\mathbb{F}_q$, we present the following proposition.
\begin{proposition}\label{proposition4}\cite{mullen2013handbook}
Normal elements of $\mathbb{F}_{q^n}$ over $\mathbb{F}_q$ are in one-to-one correspondence to circulant matrices in $\tl{GL}_n(\mathbb{F}_q)$, which implies that the total number of normal elements of $\mathbb{F}_{q^n}$ over $\mathbb{F}_q$ can be evaluated as $\Phi_q(n)=\Phi_q(x^n-1)$.
\end{proposition}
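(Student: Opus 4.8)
The plan is to make the asserted correspondence completely explicit and then read the count off from the two propositions already at hand. First I would fix, once and for all, a normal element $\alpha$ of $\mathbb{F}_{q^n}$ over $\mathbb{F}_q$; such an element exists by the normal basis theorem, so $B=(\alpha,\alpha^{[1]},\dots,\alpha^{[n-1]})$ is a basis of $\mathbb{F}_{q^n}$ over $\mathbb{F}_q$. Given an arbitrary $\beta\in\mathbb{F}_{q^n}$, I would expand each Frobenius power $\beta^{[j]}$ in this basis, $\beta^{[j]}=\sum_{i=0}^{n-1}M_{j,i}\,\alpha^{[i]}$ with $M_{j,i}\in\mathbb{F}_q$, and assemble the coefficients into a matrix $M_\beta=(M_{j,i})_{0\leqslant j,i\leqslant n-1}\in\mathcal{M}_{n,n}(\mathbb{F}_q)$. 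The map $\beta\mapsto M_\beta$ is the candidate for the claimed bijection.

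The key step is to check that $M_\beta$ is always circulant. Raising $\beta^{[j]}=\sum_iM_{j,i}\alpha^{[i]}$ to the $q$-th power and using $\alpha^{[n]}=\alpha$ — so that the $q$-th power map permutes the entries of $B$ cyclically — gives $\beta^{[j+1]}=\sum_iM_{j,i}\alpha^{[i+1]}$, hence $M_{j+1,i}=M_{j,i-1}$ with all indices read modulo $n$; this is exactly the defining relation of a circulant matrix, so $M_\beta=\mathcal{P}_n(\bm{m})$ where $\bm{m}=(M_{0,0},\dots,M_{0,n-1})$ is the coordinate vector of $\beta$ in $B$. Running the same computation backwards, for any $\bm{m}=(m_0,\dots,m_{n-1})\in\mathbb{F}_q^n$ the element $\beta=\sum_im_i\alpha^{[i]}$ satisfies $M_\beta=\mathcal{P}_n(\bm{m})$. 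Since $\beta\mapsto\bm{m}$ is the coordinate isomorphism $\mathbb{F}_{q^n}\to\mathbb{F}_q^n$ attached to $B$ and $\bm{m}\mapsto\mathcal{P}_n(\bm{m})$ is a bijection onto $\mathcal{P}_n(\mathbb{F}_q)$, the composite $\beta\mapsto M_\beta$ is a bijection from $\mathbb{F}_{q^n}$ onto $\mathcal{P}_n(\mathbb{F}_q)$.

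Next I would identify the normal elements on both sides. By definition $\beta$ is a normal element precisely when $\beta,\beta^{[1]},\dots,\beta^{[n-1]}$ are linearly independent over $\mathbb{F}_q$; the rows of $M_\beta$ are the coordinate vectors of these elements in $B$, so this independence is equivalent to $M_\beta$ being invertible. Hence the bijection of the previous paragraph restricts to a bijection between the normal elements of $\mathbb{F}_{q^n}$ over $\mathbb{F}_q$ and the set $GL_n(\mathbb{F}_q)\cap\mathcal{P}_n(\mathbb{F}_q)$ of invertible circulant matrices, which is the claimed one-to-one correspondence. Finally, for the enumeration I would invoke the proposition of Otmani et al., by which $\mathcal{P}_n(\bm{m})$ is invertible if and only if $\gcd(\bm{m}(x),x^n-1)=1$; thus invertible $n\times n$ circulant matrices over $\mathbb{F}_q$ are counted by the number of polynomials of degree less than $n$ coprime to $x^n-1$, and Proposition~\ref{proposition2} applied to $f(x)=x^n-1$ evaluates this number as $\Phi_q(x^n-1)$.

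The step I expect to need the most care is the circulant claim: it is where the hypothesis that $\alpha$ is normal (not merely a basis element) is essential — without it the coordinate vectors of the Frobenius powers of $\beta$ need not shift cyclically — and it is also where one must be careful with the index arithmetic modulo $n$ coming from $\alpha^{[n]}=\alpha$. Everything after that is routine linear algebra together with the two cited propositions.
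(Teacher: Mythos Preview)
The paper does not supply its own proof of this proposition; it is simply stated with a citation to \cite{mullen2013handbook}. Your argument is correct and is the standard one: after fixing a reference normal basis $B=(\alpha,\alpha^{[1]},\dots,\alpha^{[n-1]})$, the Frobenius permutes $B$ cyclically, which forces the coordinate matrix $M_\beta$ of $(\beta,\beta^{[1]},\dots,\beta^{[n-1]})$ to be circulant, and invertibility of $M_\beta$ is exactly the linear independence of these conjugates; the enumeration then follows from the two propositions you invoke. There is nothing to compare against in the paper, and nothing missing from your proof.
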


\section{Hard problems in coding theory}\label{section3}
This section mainly discusses classical hard problems in coding theory on which the security of code-based cryptosystems relies, as well as the best known attacks on them that will be useful to estimate the security level of our proposal later in this paper. 

\begin{definition}[Syndrome Decoding (SD) Problem]
Given positive integers $n,k$ and $t$, let $H$ be an $(n-k)\times n$ matrix over $\mathbb{F}_q$ of full rank and $\bm{s}\in\mathbb{F}_q^{n-k}$. The SD problem with parameters $(q,n,k,t)$ is to search for a vector $\bm{e}\in\mathbb{F}_q^n$ such that $\bm{s}=\bm{e}H^T$ and $\tl{wt}_H(\bm{e})=t$.
\end{definition}

The SD problem, proved to be NP-complete by Berlekamp et al. in \cite{inherent1978}, plays a crucial role in both complexity theory and code-based cryptography. The first instance of the SD problem being used in code-based cryptography is the McEliece cryptosystem \cite{mceliece1978public} based on Goppa codes. A rank metric counterpart of this problem is the rank syndrome decoding problem described as follows.

\begin{definition}[Rank Syndrome Decoding (RSD) Problem]\label{definition}
Given positive integers $m,n,k$ and $t$, let $H$ be an $(n-k)\times n$ matrix over $\mathbb{F}_{q^m}$ of full rank and $\bm{s}\in\mathbb{F}_{q^m}^{n-k}$. The RSD problem with parameters $(q,m,n,k,t)$ is to search for a vector $\bm{e}\in\mathbb{F}_{q^m}^n$ such that $\bm{s}=\bm{e}H^T$ and $\tl{wt}_R(\bm{e})=t$.
\end{definition}

The RSD problem is an important issue in rank metric based cryptography, which has been used for designing cryptosystems since the proposal of the GPT cryptosystem \cite{gabidulin1991ideals} in 1991. However, the hardness of this problem had never been proved until the work in \cite{hardness2016}, where the authors gave a randomized reduction of the SD problem to the RSD problem.

Generally speaking, attacks on the RSD problem can be divided into two categories, namely the combinatorial attack and algebraic attack. The main idea of combinatorial attacks consists in solving a linear system obtained from the parity-check equation, whose unknowns are components of $e_i\,(1\leqslant i\leqslant n)$ with respect to a potential support of $\bm{e}$. Up to now, the best known combinatorial attacks can be found in\cite{technique2002,complexity2016,algorithm2018}, as summarized in Table \ref{table1}.

\begin{table}[!h]\label{table1}
\setlength{\abovecaptionskip}{0.1cm}
\setlength{\belowcaptionskip}{-0.2cm}
\begin{center}
\begin{tabular}{|c|c|}
\hline
Attack & \makecell*[c]{Complexity}\\
\hline
\cite{technique2002} & \makecell*[c]{$\mathcal{O}\left(\min\left\{ m^3t^3q^{(t-1)(k+1)},(k+t)^3t^3q^{(t-1)(m-t)}\right\} \right)$}\\
\hline
\cite{complexity2016} & \makecell*[c]{$\mathcal{O}\left((n-k)^3m^3q^{\min\left\{t\left\lceil\frac{mk}{n}\right\rceil,(t-1)\left\lceil\frac{m(k+1)}{n}\right\rceil\right\}}\right)$}\\
\hline
\cite{algorithm2018} & \makecell*[c]{$\mathcal{O}\left((n-k)^3m^3q^{t\left\lceil \frac{m(k+1)}{n}\right\rceil-m}\right)$}\\
\hline
\end{tabular}
\end{center}
\caption{Best known combinatorial attacks on the RSD problem.}\label{table1}
\end{table}

As for the algebraic attack, the main idea consists in converting an RSD instance into a quadratic system and then solving this system using algebraic approaches. Here in this paper, we mainly consider the attacks proposed in \cite{improvements2020,algebraic2020,complexity2016}, whose complexity and applicable condition are summarized in Table \ref{table2}.

\begin{table}[!h]\label{table2}
\setlength{\abovecaptionskip}{-0.1cm}
\setlength{\belowcaptionskip}{-0.2cm}
\begin{center}
\begin{tabular}{|c|c|c|}
\hline
Attack & Condition & \makecell*[c]{Complexity}\\
\hline
\cite{complexity2016}& $\left\lceil\frac{(t+1)(k+1)-(n+1)}{t}\right\rceil\leqslant k$ & \makecell*[c]{$\mathcal{O}\left( k^3t^3q^{t\left\lceil\frac{(t+1)(k+1)-(n+1)}{t}\right\rceil}\right)$}\\
\hline
\cite{improvements2020} & \multirowcell{3}{$m\binom{n-k-1}{t}\geqslant \binom{n}{t}-1$} & \makecell*[c]{$\mathcal{O}\left(m\binom{n-p-k-1}{t}\binom{n-p}{t}^{\omega-1}\right)$, where $\omega=2.81$ and \\ $p=\min\{1\leqslant i\leqslant n:m\binom{n-i-k-1}{t}\geqslant \binom{n-i}{t}-1\}$}\\
\cline{1-1}\cline{3-3}
\cite{algebraic2020} & & \makecell*[c]{$\mathcal{O}\left(\left(\frac{((m+n)t)^t}{t!}\right)^\omega\right)$}\\
\hline
\cite{improvements2020} & \multirowcell{3}{$m\binom{n-k-1}{t}< \binom{n}{t}-1$} & \makecell*[c]{$\mathcal{O}\left(q^{at}m\binom{n-k-1}{t}\binom{n-a}{t}^{\omega-1}\right)$, where $a=$\\ $\min\{1\leqslant i\leqslant n:m\binom{n-k-1}{t}\geqslant\binom{n-i}{t}-1\}$}\\
\cline{1-1}\cline{3-3}
\cite{algebraic2020} & &\makecell*[c]{$\mathcal{O}\left(\left(\frac{((m+n)t)^{t+1}}{(t+1)!}\right)^\omega\right)$}\\
\hline
\end{tabular}
\end{center}
\caption{Best known algebraic attacks on the RSD problem.}\label{table2}
\end{table}

\section{Semilinear transformations}\label{section4}
In this section, we first introduce the concept of semilinear transformations and investigate their properties when acting on linear codes. Then we discuss these transformations from the perspective of linearized permutation polynomials.
\subsection{Semilinear transformations}
Note that $\mathbb{F}_{q^m}$ can be seen as an $m$-dimensional linear space over $\mathbb{F}_q$. Let $\bm{a}=(\alpha_1,\cdots,\alpha_m)$ and $\bm{b}=(\beta_1,\cdots,\beta_m)$ be two basis vectors of $\mathbb{F}_{q^m}$ over $\mathbb{F}_q$. For any $\alpha=\sum_{i=1}^m\lambda_i\alpha_i\in\mathbb{F}_{q^m}$ with $\lambda_i\in\mathbb{F}_q$, we define an automorphism of $\mathbb{F}_{q^m}$ as
\begin{align*}
\varphi(\alpha)=\sum_{i=1}^m\lambda_i\varphi(\alpha_i)=\sum_{i=1}^m\lambda_i\beta_i.
\end{align*}
It is easy to verify that $\varphi$ is $\mathbb{F}_q$-linear, or equivalently the following property holds
\[\varphi(\lambda_1\alpha+\lambda_2\beta)=\lambda_1\varphi(\alpha)+\lambda_2\varphi(\beta)\]
for any $\alpha,\beta\in\mathbb{F}_{q^m}$ and $\lambda_1,\lambda_2\in\mathbb{F}_q$. By $\aut$ we denote the set of all $\mathbb{F}_q$-linear automorphisms of $\mathbb{F}_{q^m}$.

In what follows, we will do further study on this type of transformations. Firstly, we present a basic fact about the $\mathbb{F}_q$-linear automorphisms of $\mathbb{F}_{q^m}$.
\begin{proposition}\label{proposition6}
The total number of $\mathbb{F}_q$-linear automorphisms of $\mathbb{F}_{q^m}$ is \[|\aut|=\prod_{i=0}^{m-1}(q^m-q^i).\] 
\end{proposition}

\begin{proof}
It is easy to see that all the transformations in $\aut$ are in one-to-one correspondence to $\tl{GL}_m(\mathbb{F}_q)$. By evaluating the cardinality of $\tl{GL}_m(\mathbb{F}_q)$, we obtain the conclusion immediately.
\end{proof}

Let $\varphi\in\aut$ be an $\mathbb{F}_q$-linear automorphism of $\mathbb{F}_{q^m}$. For $\bm{v}=(v_1,\cdots,v_n)\in\mathbb{F}_{q^m}^n$, let $\varphi(\bm{v})=(\varphi(v_1),\cdots,\varphi(v_n))$. For $\mathcal{V}\subseteq\mathbb{F}_{q^m}^n$, let $\varphi(\mathcal{V})=\{\varphi(\bm{v}):\bm{v}\in\mathcal{V}\}$. For $M=(M_{ij})\in\mathcal{M}_{k,n}(\mathbb{F}_{q^m})$, let $\varphi(M)=(\varphi(M_{ij}))$. In these situations, we call $\varphi$ a semilinear transformation over $\mathbb{F}_{q^m}/\mathbb{F}_q$.

For a vector $\bm{c}\in\mathbb{F}_{q^m}^n$, a natural question is how the Hamming (rank) weight of $\bm{c}$ behaves under the action of $\varphi\in\aut$. For this reason, we introduce the following proposition.
\begin{proposition}\label{proposition}
A semilinear transformation over $\mathbb{F}_{q^m}/\mathbb{F}_q$ is isometric in both the Hamming metric and the rank metric.
\end{proposition}

\begin{proof}
For any $\alpha\in\mathbb{F}_{q^m}$ and $\varphi\in\aut$, obviously $\varphi(\alpha)=0$ holds if and only if $\alpha=0$. Hence we have $\tl{Supp}_H(\varphi(\bm{v}))=\tl{Supp}_H(\bm{v})$ for any $\bm{v}\in\mathbb{F}_{q^m}^n$, which implies that $\tl{wt}_H(\varphi(\bm{v}))=\tl{wt}_H(\bm{v})$.

As for the rank metric, let $\bm{v}\in\mathbb{F}_{q^m}^n$ such that $\tl{wt}_R(\bm{v})=n$. If $\tl{wt}_R(\varphi(\bm{v}))<n$, then there exists $\bm{b}\in\mathbb{F}_q^n\backslash \{\bm{0}\}$ such that $\varphi(\bm{v})\bm{b}^T=\varphi(\bm{v}\bm{b}^T)=0$. This implies that $\bm{v}\bm{b}^T=0$, which conflicts with $\tl{wt}_R(\bm{v})=n$. More generally, suppose that $\tl{wt}_R(\bm{v})=r<n$. Then there exist $\bm{v}^*\in\mathbb{F}_{q^m}^r$ with $\tl{wt}_R(\bm{v}^*)=r$ and $Q\in \tl{GL}_n(\mathbb{F}_q)$ such that $\bm{v}=(\bm{v}^*|\bm{0})Q$. It follows that $\varphi(\bm{v})=(\varphi(\bm{v}^*)|\bm{0})Q$ and therefore $\tl{wt}_R(\varphi(\bm{v}))=\tl{wt}_R(\varphi(\bm{v}^*))=r$.
\end{proof}

\begin{remark}
Let $\mathbb{K}$ be an extension field of $\mathbb{F}_{q^m}$. With a similar analysis to Proposition $\ref{proposition}$, we can deduce from a straightforward verification that a semilinear transformation over $\mathbb{K}/\mathbb{F}_{q^m}$ preserves the rank weight of a vector in $\mathbb{K}^n$ with respect to $\mathbb{F}_q$.
\end{remark}

For $\varphi\in\aut$ and a linear code $\mathcal{C}\subseteq\mathbb{F}_{q^m}^n$, it is easy to verify that $\varphi(\mathcal{C})$ is an $\mathbb{F}_q$-linear space, but generally no longer $\mathbb{F}_{q^m}$-linear. Based on this observation, we classify the semilinear transformations over $\mathbb{F}_{q^m}/\mathbb{F}_q$ as follows.

\begin{definition}
Let $\mathcal{C}\subseteq\mathbb{F}_{q^m}^n$ be an $[n,k]$ linear code and $\varphi\in\aut$. If $\varphi(\mathcal{C})$ is also an $\mathbb{F}_{q^m}$-linear code, we say that $\varphi$ is linear over $\mathcal{C}$. Otherwise, we say that $\varphi$ is semilinear over $\mathcal{C}$. If $\varphi$ is linear over all linear codes over $\mathbb{F}_{q^m}$, we say that $\varphi$ is fully linear over $\mathbb{F}_{q^m}$. Otherwise, we say that $\varphi$ is semilinear over $\mathbb{F}_{q^m}$.
\end{definition}

The following theorem provides a sufficient and necessary condition for $\varphi\in\aut$ being fully linear over $\mathbb{F}_{q^m}$.
\begin{theorem}\label{theorem}
Let $\bm{a}=(\alpha_1,\cdots,\alpha_m)$ be a basis vector of $\mathbb{F}_{q^m}$ over $\mathbb{F}_q$ and $\varphi\in\aut$. Let $A=\left[\varphi(\alpha_1\bm{a})^T,\cdots,\varphi(\alpha_m\bm{a})^T\right]^T$, then a sufficient and necessary condition for $\varphi$ being fully linear is $\rank(A)=1$.
\end{theorem}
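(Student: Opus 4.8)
The plan is to reduce ``fully linear'' to linearity of $\varphi$ on the single, very special code $\mathcal{C}_0:=\langle\bm a\rangle_{q^m}=\{\beta\bm a:\beta\in\mathbb{F}_{q^m}\}$, in which $\bm a$ is read as a length-$m$ vector over $\mathbb{F}_{q^m}$, and then to recognise ``$\varphi$ linear on $\mathcal{C}_0$'' as the rank condition on $A$ by a dimension count.

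\textbf{Step 1: the matrix $A$ detects linearity on $\mathcal{C}_0$.} Because $\bm a$ is a basis vector, $\{\alpha_1\bm a,\dots,\alpha_m\bm a\}$ is an $\mathbb{F}_q$-basis of $\mathcal{C}_0$, so its image under $\varphi$ — which is precisely the set of rows of $A$ — is an $\mathbb{F}_q$-basis of $\varphi(\mathcal{C}_0)$. Hence $\varphi(\mathcal{C}_0)=\langle A\rangle_q$ has $\mathbb{F}_q$-dimension $m$, and every row of $A$ is nonzero, so $\rank(A)\geqslant 1$. Now $\varphi(\mathcal{C}_0)$ is $\mathbb{F}_{q^m}$-linear exactly when $\langle A\rangle_q=\langle A\rangle_{q^m}$: an $\mathbb{F}_{q^m}$-subspace of $\mathbb{F}_q$-dimension $m$ has $\mathbb{F}_{q^m}$-dimension $1$, and conversely if $\langle A\rangle_{q^m}$ has $\mathbb{F}_{q^m}$-dimension $1$ then it has $\mathbb{F}_q$-dimension $m$ and therefore equals the $m$-dimensional $\langle A\rangle_q$. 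So $\varphi$ is linear on $\mathcal{C}_0$ if and only if $\rank(A)=1$; since ``fully linear'' forces linearity on $\mathcal{C}_0$, this already gives the ``only if'' half of the theorem.

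\textbf{Step 2: linearity on $\mathcal{C}_0$ propagates to every code.} Suppose $\rank(A)=1$, so $\varphi(\mathcal{C}_0)$ is $\mathbb{F}_{q^m}$-linear. The only structural input needed is that $\varphi$, being $\mathbb{F}_q$-linear and applied coordinatewise, commutes with right multiplication by matrices over $\mathbb{F}_q$, i.e. $\varphi(\bm v N)=\varphi(\bm v)N$ for $N$ over $\mathbb{F}_q$. For an arbitrary $\bm u\in\mathbb{F}_{q^m}^n$, expand each coordinate of $\bm u$ in the basis $\bm a$ to write $\bm u=\bm a N$ with $N\in\mathcal{M}_{m,n}(\mathbb{F}_q)$; then $\langle\bm u\rangle_{q^m}=\mathcal{C}_0 N$, hence $\varphi(\langle\bm u\rangle_{q^m})=\varphi(\mathcal{C}_0)N$, which is $\mathbb{F}_{q^m}$-linear since $\varphi(\mathcal{C}_0)$ is and $N$ has entries in $\mathbb{F}_q\subseteq\mathbb{F}_{q^m}$. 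Thus $\varphi$ is linear on every one-dimensional code. Finally, an arbitrary $[n,k]$ code $\mathcal{C}=\langle\bm c_1,\dots,\bm c_k\rangle_{q^m}$ is the sum of $\mathbb{F}_q$-subspaces $\mathcal{C}=\sum_i\langle\bm c_i\rangle_{q^m}$, and since the additive bijection $\varphi$ satisfies $\varphi\big(\sum_i V_i\big)=\sum_i\varphi(V_i)$ we get $\varphi(\mathcal{C})=\sum_i\varphi(\langle\bm c_i\rangle_{q^m})$, a sum of $\mathbb{F}_{q^m}$-subspaces, hence $\mathbb{F}_{q^m}$-linear. So $\varphi$ is fully linear, and together with Step 1 the equivalence follows.

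I do not expect a serious obstacle here: the content is one dimension count (Step 1) together with the elementary remark that $\varphi$ commutes with $\mathbb{F}_q$-column operations (Step 2). The single point that really needs care is that the whole reduction rests on $\bm a$ being a \emph{basis} vector — it is exactly the full $\mathbb{F}_q$-rank of $\bm a$ that makes the expansion $\bm u=\bm a N$ available for every $\bm u$, and hence lets linearity on the one code $\mathcal{C}_0$ certify fully-linearity.
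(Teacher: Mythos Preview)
Your proof is correct and follows the same underlying approach as the paper: both directions hinge on the special code $\mathcal{C}_0=\langle\bm a\rangle_{q^m}$, the expansion $\bm u=\bm a N$ with $N$ over $\mathbb{F}_q$, and the fact that $\varphi$ commutes with right multiplication by $\mathbb{F}_q$-matrices. Your presentation is somewhat more modular---a clean dimension count in Step~1 and the factoring through one-dimensional codes plus sums in Step~2---whereas the paper carries out the same content via an explicit cardinality argument ($\mathcal{V}_i=\mathcal{V}$) and a direct element-level check that $\mu\bm c\in\varphi(\mathcal{C})$; the ingredients are identical.
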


\begin{proof}
On the necessity aspect. Let $\mathcal{C}=\langle\bm{a}\rangle_{q^m}$ and $\bm{a}_i=\varphi(\alpha_i\bm{a})$ be the $i$-th row vector of $A$. Note that $\varphi$ is fully linear over $\mathbb{F}_{q^m}$, then  $\varphi$ is linear over $\mathcal{C}$, or equivalently $\varphi(\mathcal{C})$ is $\mathbb{F}_{q^m}$-linear. Let $k=\dim_{q^m}(\varphi(\mathcal{C}))$, then $(q^m)^k=|\varphi(\mathcal{C})|=|\mathcal{C}|=q^m$ and therefore $k=1$. The conclusion is proved immediately because of $\bm{a}_i\,(1\leqslant i\leqslant m)$ being contained in $\varphi(\mathcal{C})$.

On the sufficiency aspect. Let $\mathcal{V}=\{\sum_{j=1}^m\lambda_j\bm{a}_j:\lambda_j\in\mathbb{F}_q\}$ and $\mathcal{V}_i=\{\mu\bm{a}_i:\mu\in\mathbb{F}_{q^m}\}$ for any $1\leqslant i\leqslant m$. Note that $A$ has rank $1$ over $\mathbb{F}_{q^m}$, then there exists $\mu_{ij}\in\mathbb{F}_{q^m}^*=\mathbb{F}_{q^m}\backslash\{0\}$ such that $\bm{a}_j=\mu_{ij}\bm{a}_i$ for any $1\leqslant i,j\leqslant m$. It follows that $\mathcal{V}=\{\sum_{j=1}^m\lambda_j\mu_{ij}\bm{a}_i:\lambda_j\in\mathbb{F}_q\}$ and furthermore $\mathcal{V}\subseteq\mathcal{V}_i$. Together with $|\mathcal{V}|=|\mathcal{V}_i|=q^m$, we have $\mathcal{V}=\mathcal{V}_i$. Hence for any $\mu\in\mathbb{F}_{q^m}$, there exist $\lambda_{i1},\cdots,\lambda_{im}\in\mathbb{F}_q$ such that $\mu\bm{a}_i=\sum_{j=1}^m\lambda_{ij}\bm{a}_j$.

Let $\mathcal{C}$ be an arbitrary linear code over $\mathbb{F}_{q^m}$. For any $\bm{c}\in\varphi(\mathcal{C})$, there exists $\bm{u}\in\mathcal{C}$ such that $\bm{c}=\varphi(\bm{u})$. Meanwhile, there exists $M\in\mathcal{M}_{m,n}(\mathbb{F}_q)$ such that $\bm{u}=\bm{a}M$. It follows that
\[ \bm{a}_jM=\varphi(\alpha_j\bm{a})M=\varphi(\alpha_j\bm{a}M)=\varphi(\alpha_j\bm{u})\in\varphi(\mathcal{C})\] 
 for any $1\leqslant j\leqslant m$. Assume that $\sum_{i=1}^ma_i\alpha_i=1$ for $a_i\in\mathbb{F}_q$, then $\bm{u}=\bm{a}M=\sum_{i=1}^ma_i\alpha_i\bm{a}M$. Hence
\begin{align*}
\mu\bm{c}&=\mu\varphi(\bm{u})=\mu\varphi(\sum_{i=1}^ma_i\alpha_i\bm{a}M)=\mu\sum_{i=1}^ma_i\varphi(\alpha_i\bm{a})M=\sum_{i=1}^ma_i\mu\bm{a}_iM.
\end{align*}
Note that for any $\mu\in\mathbb{F}_{q^m}$ and $1\leqslant i\leqslant m$, there exists $\lambda_{ij}\in\mathbb{F}_q$ such that $\mu\bm{a}_i=\sum_{j=1}^m\lambda_{ij}\bm{a}_j$. Hence
\[\mu\bm{c}=\sum_{i=1}^ma_i(\sum_{j=1}^m\lambda_{ij}\bm{a}_j)M=\sum_{i=1}^m\sum_{j=1}^m\lambda_{ij}a_i(\bm{a}_jM)\in\varphi(\mathcal{C})\]
because of $\bm{a}_jM\in\varphi(\mathcal{C})$ and $\varphi(\mathcal{C})$ being $\mathbb{F}_q$-linear. Following this, we conclude that $\varphi(\mathcal{C})$ is $\mathbb{F}_{q^m}$-linear and therefore $\varphi$ is fully linear over $\mathbb{F}_{q^m}$.
\end{proof}

\begin{remark}
Note that $\rank(A)$ is independent of the basis vector. More generally, let $\bm{a}_1$ and $\bm{a}_2$ be another two basis vectors of $\mathbb{F}_{q^m}$ over $\mathbb{F}_q$, then there exist $Q_1,Q_2\in \tl{PC}_n(\mathbb{F}_q)\cap \tl{GL}_n(\mathbb{F}_q)$ such that $\bm{a}_1=\bm{a}Q_1$ and $\bm{a}_2=\bm{a}Q_2$. Let $A'=\varphi(\bm{a}_1^T\bm{a}_2)$, then $A'=\varphi((\bm{a}Q_1)^T\bm{a}Q_2)=\varphi(Q_1^T\bm{a}^T\bm{a}Q_2)=Q_1^TAQ_2$, which implies that $\rank(A)=\rank(A')$.
\end{remark}

To characterize how a semilinear transformation disturbs the algebraic structure of linear codes, we introduce the concept of nonlinearity of a semilinear transformation. Formally, we present the following definition.

\begin{definition}
Let  $\varphi\in\aut$ and $\bm{a}=(\alpha_1,\cdots,\alpha_m)$ be a basis vector of $\mathbb{F}_{q^m}$ over $\mathbb{F}_q$. Let $A\in\mathcal{M}_{m,m}(\mathbb{F}_{q^m})$ be a matrix as defined in Theorem \ref{theorem}. The nonlinearity of $\varphi$ with extension degree $m$ is defined to be $\tl{NL}_m(\varphi)=\frac{r}{m}$ where $r=\rank(A)$.
\end{definition}

\begin{remark}
For $\varphi\in\aut$, it is clear that $\tl{NL}_m(\varphi)$ takes values in $\{\frac{1}{m},\frac{2}{m},\ldots,1\}$.
\end{remark}

The following theorem gives an accurate count of fully linear transformations over $\mathbb{F}_{q^m}/\mathbb{F}_q$.

\begin{theorem}\label{theorem2}
The total number of fully linear transformations in $\aut$ is $m(q^m-1)$.
\end{theorem}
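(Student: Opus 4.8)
The plan is to combine Theorem \ref{theorem} with a well-chosen basis to identify the fully linear transformations over $\mathbb{F}_{q^m}$ explicitly: I claim they are exactly the maps $\varphi_{\gamma,\ell}\colon x\mapsto \gamma x^{q^{\ell}}$ with $\gamma\in\mathbb{F}_{q^m}^*$ and $0\leqslant \ell\leqslant m-1$ (the compositions of a Frobenius power with a nonzero scaling), and then to check that these $m(q^m-1)$ maps are pairwise distinct. The case $m=1$ is trivial (then $\mathbb{F}_{q^m}=\mathbb{F}_q$ and the only $\mathbb{F}_q$-linear automorphisms are the scalings $x\mapsto\gamma x$), so assume $m\geqslant 2$.

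For the necessity direction I would apply Theorem \ref{theorem} with the basis vector $\bm{a}=(1,\alpha,\ldots,\alpha^{m-1})$ for a polynomial element $\alpha$ of $\mathbb{F}_{q^m}$ (such an $\alpha$ exists, being a root of an irreducible polynomial of degree $m$ over $\mathbb{F}_q$). With $\alpha_i=\alpha^{i-1}$ the $(i,j)$-entry of $A$ is $\varphi(\alpha^{i+j-2})$, so $A$ has Hankel form, and no entry of $A$ vanishes because $\varphi$ is injective and $\alpha^{s}\neq 0$. If $\varphi$ is fully linear then Theorem \ref{theorem} gives $\rank(A)=1$, hence every $2\times 2$ minor vanishes; taking the minors on consecutive rows and columns yields $\varphi(\alpha^{s})\varphi(\alpha^{s+2})=\varphi(\alpha^{s+1})^2$ for $0\leqslant s\leqslant 2m-4$, and since all entries are nonzero the ratios $\varphi(\alpha^{s+1})/\varphi(\alpha^{s})$ are all equal to a common value $r\in\mathbb{F}_{q^m}^*$. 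Therefore $\varphi(\alpha^{s})=\gamma r^{s}$ for $0\leqslant s\leqslant 2m-2$, where $\gamma=\varphi(1)\neq 0$.

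The key remaining step — and the one I expect to be the main obstacle, since the rank-$1$ hypothesis has so far only constrained the \emph{values} $\varphi(\alpha^s)$ — is to promote this to a statement about $\varphi$ itself. For this I would exploit that $\alpha$ is a polynomial element: its minimal polynomial over $\mathbb{F}_q$ has degree exactly $m$, say $\alpha^{m}=\sum_{i=0}^{m-1}a_i\alpha^{i}$ with $a_i\in\mathbb{F}_q$. Applying the $\mathbb{F}_q$-linear map $x\mapsto\gamma^{-1}\varphi(x)$ to this relation (note $m\leqslant 2m-2$, so $\alpha^m$ lies in the valid range) gives $r^{m}=\sum_{i=0}^{m-1}a_i r^{i}$; that is, $r$ is a root of the minimal polynomial of $\alpha$, hence $r=\alpha^{q^{\ell}}$ for a unique $\ell\in\{0,\ldots,m-1\}$. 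Then $\varphi(\alpha^{i})=\gamma r^{i}=\gamma(\alpha^{i})^{q^{\ell}}$ for $i=0,\ldots,m-1$, so the $\mathbb{F}_q$-linear maps $\varphi$ and $\varphi_{\gamma,\ell}$ agree on the basis $1,\alpha,\ldots,\alpha^{m-1}$ and are therefore equal.

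Conversely, each $\varphi_{\gamma,\ell}$ with $\gamma\neq 0$ is an $\mathbb{F}_q$-linear automorphism and is fully linear: with the same basis, the associated matrix $A$ has $(i,j)$-entry $\gamma(\alpha^{q^{\ell}})^{i+j-2}$, which is the outer product of the nonzero vectors $((\alpha^{q^{\ell}})^{i-1})_i$ and $(\gamma(\alpha^{q^{\ell}})^{j-1})_j$, hence has rank $1$, and Theorem \ref{theorem} applies (alternatively, Frobenius powers and scalings visibly carry $\mathbb{F}_{q^m}$-linear codes to $\mathbb{F}_{q^m}$-linear codes, so any composition does). Finally I would check $(\gamma,\ell)\mapsto\varphi_{\gamma,\ell}$ is injective: evaluating $\varphi_{\gamma,\ell}=\varphi_{\gamma',\ell'}$ at $x=1$ gives $\gamma=\gamma'$, and then $x^{q^{\ell}}=x^{q^{\ell'}}$ for all $x\in\mathbb{F}_{q^m}^*$ forces $(q^m-1)\mid(q^{\ell}-q^{\ell'})$, which for $0\leqslant\ell,\ell'\leqslant m-1$ forces $\ell=\ell'$. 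Hence the fully linear transformations are precisely the maps $\varphi_{\gamma,\ell}$, and their number is $|\mathbb{F}_{q^m}^*|\cdot m=m(q^m-1)$.
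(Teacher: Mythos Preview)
Your proof is correct and follows essentially the same route as the paper: both apply Theorem~\ref{theorem} with the polynomial basis $\bm{a}=(1,\alpha,\ldots,\alpha^{m-1})$, extract from the rank-$1$ condition that $\varphi(\alpha^{i})=\beta\gamma^{i}$, and then use the minimal polynomial of $\alpha$ to force $\gamma$ to be a Frobenius conjugate of $\alpha$. The differences are only cosmetic---you read off the geometric progression via the Hankel $2\times 2$ minors rather than via proportionality of the first two rows, and you add an explicit injectivity check for the parametrization $(\gamma,\ell)\mapsto\varphi_{\gamma,\ell}$ (which the paper leaves implicit and only spells out as $\varphi(\mu)=\beta\mu^{[j]}$ in the remark following the theorem).
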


\begin{proof}
Let $\varphi\in\aut$ and $\bm{a}=(1,\alpha,\cdots,\alpha^{m-1})$ where $\alpha$ is a polynomial element of $\mathbb{F}_{q^m}$ over $\mathbb{F}_q$. By Theorem \ref{theorem}, a necessary condition for $\varphi$ being fully linear is that $\varphi(\alpha\bm{a})=\gamma\varphi(\bm{a})$, namely
\begin{align}\label{equation4}
(\varphi(\alpha),\varphi(\alpha^2),\cdots,\varphi(\alpha^m))=\gamma(\varphi(1),\varphi(\alpha),\cdots,\varphi(\alpha^{m-1}))
\end{align}
holds for some $\gamma\in\mathbb{F}_{q^m}$. Assume that $\varphi(1)=\beta\in\mathbb{F}_{q^m}^*$, then we can deduce from (\ref{equation4}) that 
\[\varphi(\alpha^i)=\gamma\varphi(\alpha^{i-1})=\gamma^i\beta \tl{ for } 1\leqslant i\leqslant m.\] 

Let $f(x)=x^m+\sum_{i=0}^{m-1}a_ix^i\in\mathbb{F}_q[x]$ be the minimal polynomial of $\alpha$, then it follows that
\begin{align}\label{equation2}
f(\alpha)=\alpha^m+\sum_{i=0}^{m-1}a_i\alpha^i=0.
\end{align}
Because of $\varphi$ being $\mathbb{F}_q$-linear, applying $\varphi$ to both sides of (\ref{equation2}) leads to the equation
\begin{align*}
\varphi(\alpha^m)+\sum_{i=0}^{m-1}a_i\varphi(\alpha^i)=\gamma^m\beta+\sum_{i=0}^{m-1}a_i\gamma^{i}\beta=0.
\end{align*} 
This implies that $f(\gamma)=0$, then $\gamma=\alpha^{[i]}$ for some $0\leqslant i\leqslant m-1$. 

Conversely, let $\Gamma=\{\alpha^{[i]}\}_{i=0}^{m-1}$, then it is easy to verify that for any duple $(\gamma,\beta)\in\Gamma\times\mathbb{F}_{q^m}^*$, the semilinear transformation over $\mathbb{F}_{q^m}/\mathbb{F}_q$ determined by $\varphi(\alpha^i)=\beta\gamma^i\,(0\leqslant i\leqslant m-1)$ is fully linear over $\mathbb{F}_{q^m}$. Hence all the fully linear transformations over $\mathbb{F}_{q^m}$ are in one-to-one correspondence to the Cartesian product $\Gamma\times\mathbb{F}_{q^m}^*$, which leads to the conclusion immediately.
\end{proof}

\begin{remark}\label{remark}
For a polynomial element $\alpha$ of $\mathbb{F}_{q^m}$ with respect to $\mathbb{F}_q$, let $\Gamma=\{\alpha^{[i]}\}_{i=0}^{m-1}$ be the set of conjugates of $\alpha$. For any $\gamma\in\Gamma$ and $\beta\in\mathbb{F}_{q^m}^*$, the semilinear transformation $\varphi\in\aut$, determined by $\varphi(\alpha^i)=\beta\gamma^i$ for $0\leqslant i\leqslant m-1$, forms a fully linear transformation over $\mathbb{F}_{q^m}$ according to Theorem \ref{theorem2}. Note that $\gamma$ is a conjugate of $\alpha$, then there exists $0\leqslant j\leqslant m-1$ such that $\gamma=\alpha^{[j]}$. For any $\mu=\sum_{i=0}^{m-1}\lambda_i\alpha^i\in\mathbb{F}_{q^m}$ with $\lambda_i\in\mathbb{F}_q$, we have
\begin{align*}
\varphi(\mu)&=\varphi(\sum_{i=0}^{m-1}\lambda_i\alpha^i)=\sum_{i=0}^{m-1}\lambda_i\varphi(\alpha^i)=\sum_{i=0}^{m-1}\lambda_i\beta\gamma^i=\beta\sum_{i=0}^{m-1}\lambda_i(\alpha^{[j]})^i=\beta(\sum_{i=0}^{m-1}\lambda_i\alpha^i)^{[j]}=\beta \mu^{[j]}.
\end{align*} 
This implies that a fully linear transformation over $\mathbb{F}_{q^m}$ can be seen as a composition of the Frobenius transformation and stretching transformation.
\end{remark}

\begin{theorem}\label{theorem1}
For two positive integers $k<n$, let $\mathcal{C}$ be an $[n,k]$ linear code over $\mathbb{F}_{q^m}$. Let $G=[I_k|A]$ be the systematic generator matrix of $\mathcal{C}$, where $I_k$ is the $k\times k$ identity matrix and $A=(A_{ij})\in\mathcal{M}_{k,n-k}(\mathbb{F}_{q^m})$. Let $\mathcal{S}=\{A_{ij}:1\leqslant i\leqslant k,1\leqslant j\leqslant n-k\}$, then we have the following statements.
\begin{itemize}
\item[\tl{(1)}]If $\mathcal{S}\subseteq\mathbb{F}_q$, then any $\varphi\in\aut$ is linear over $\mathcal{C}$. Furthermore, we have $\varphi(\mathcal{C})=\mathcal{C}$;
\item[\tl{(2)}]If there exists $\alpha\in\mathcal{S}$ such that $\alpha$ is a polynomial element of $\mathbb{F}_{q^m}$ over $\mathbb{F}_q$, then any $\varphi\in\aut$ is fully linear if and only if $\varphi$ is linear over $\mathcal{C}$.
\end{itemize}
\end{theorem}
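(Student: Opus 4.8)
I would handle statement (1) first, as it is elementary. If $\mathcal{S}\subseteq\mathbb{F}_q$, then every entry of the systematic generator matrix $G=[I_k|A]$ lies in $\mathbb{F}_q$. For a codeword $\bm{c}=\bm{u}G$ with $\bm{u}\in\mathbb{F}_{q^m}^k$, the $\mathbb{F}_q$-linearity of $\varphi$ lets it pass through the $\mathbb{F}_q$-valued entries of $G$, giving $\varphi(\bm{c})=\varphi(\bm{u})G$ where $\varphi(\bm{u})=(\varphi(u_1),\dots,\varphi(u_k))$. Since $\varphi$ is a bijection of $\mathbb{F}_{q^m}$, the vector $\varphi(\bm{u})$ ranges over all of $\mathbb{F}_{q^m}^k$ as $\bm{u}$ does, so $\varphi(\mathcal{C})=\{\bm{v}G:\bm{v}\in\mathbb{F}_{q^m}^k\}=\mathcal{C}$; in particular $\varphi(\mathcal{C})$ is $\mathbb{F}_{q^m}$-linear, hence $\varphi$ is linear on $\mathcal{C}$.

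For statement (2), the implication ``$\varphi$ fully linear $\Rightarrow$ $\varphi$ linear on $\mathcal{C}$'' holds by the very definition of fully linear, so the whole content is the converse. Assume $\varphi(\mathcal{C})$ is $\mathbb{F}_{q^m}$-linear; the plan is to squeeze out of this a twisted multiplicativity relation for $\varphi$ and then quote Theorem~\ref{theorem}. Pick a position $(i_0,j_0)$ with $A_{i_0 j_0}=\alpha$, and let $\bm{g}_{i_0}$ be the $i_0$-th row of $G$: it has a $1$ in information coordinate $i_0$, a $0$ in every other information coordinate, and $\alpha$ in coordinate $k+j_0$. Fix $\mu\in\mathbb{F}_{q^m}$ arbitrary. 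By $\mathbb{F}_{q^m}$-linearity of $\varphi(\mathcal{C})$ there is $\bm{c}=\bm{u}G\in\mathcal{C}$ with $\varphi(\bm{c})=\mu\varphi(\bm{g}_{i_0})$. Comparing information coordinates forces $u_i=0$ for $i\neq i_0$ and $\varphi(u_{i_0})=\mu\varphi(1)$, hence $\bm{c}=u_{i_0}\bm{g}_{i_0}$; comparing coordinate $k+j_0$ then yields $\varphi(u_{i_0}\alpha)=\mu\varphi(\alpha)$. As $\mu$ runs over $\mathbb{F}_{q^m}$, the element $u_{i_0}=\varphi^{-1}(\mu\varphi(1))$ also runs over all of $\mathbb{F}_{q^m}$ (because $\varphi$ is bijective and $\varphi(1)\neq0$), so eliminating $\mu$ gives
\[\varphi(\lambda\alpha)=\gamma\,\varphi(\lambda)\qquad\text{for all }\lambda\in\mathbb{F}_{q^m},\quad\text{where }\gamma=\frac{\varphi(\alpha)}{\varphi(1)}.\]

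To finish, iterate this relation: $\varphi(\lambda\alpha^i)=\gamma^i\varphi(\lambda)$ for all $i\geqslant0$ and all $\lambda\in\mathbb{F}_{q^m}$. Applying it in each coordinate of the vector $\bm{a}=(1,\alpha,\dots,\alpha^{m-1})$ (a basis of $\mathbb{F}_{q^m}$ over $\mathbb{F}_q$, since $\alpha$ is a polynomial element) shows $\varphi(\alpha^i\bm{a})=\gamma^i\varphi(\bm{a})$ for $0\leqslant i\leqslant m-1$. Thus the matrix $\left[\varphi(\bm{a}),\varphi(\alpha\bm{a}),\dots,\varphi(\alpha^{m-1}\bm{a})\right]^T$ has every row a scalar multiple of the nonzero vector $\varphi(\bm{a})$, so it has rank $1$, and Theorem~\ref{theorem} then gives that $\varphi$ is fully linear. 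The step I expect to be the real obstacle is the coordinate bookkeeping inside the previous paragraph: one has to be careful that equating the information parts of $\varphi(\bm{c})$ and $\mu\varphi(\bm{g}_{i_0})$ genuinely forces $\bm{c}$ to be a scalar multiple of $\bm{g}_{i_0}$ (this is exactly where the systematic shape $[I_k|A]$ is indispensable), and that $u_{i_0}$ really sweeps out all of $\mathbb{F}_{q^m}$, so that the extracted identity holds for every $\lambda$ and not merely on some subset. Everything past that point is routine.
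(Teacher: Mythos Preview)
Your proposal is correct and follows essentially the same route as the paper: part~(1) is handled identically, and for part~(2) both you and the paper use the systematic shape of $G$ to force $\varphi(\beta\bm{g}_{i_0})$ (equivalently the preimage of $\mu\varphi(\bm{g}_{i_0})$) to be a scalar multiple of $\varphi(\bm{g}_{i_0})$, extract the relation $\varphi(\lambda\alpha)=\gamma\varphi(\lambda)$ with $\gamma=\varphi(\alpha)/\varphi(1)$, iterate it along the polynomial basis $(1,\alpha,\dots,\alpha^{m-1})$, and invoke Theorem~\ref{theorem}. The only cosmetic difference is that the paper parametrizes directly by $\beta\in\mathbb{F}_{q^m}$ (so the surjectivity step is implicit), whereas you parametrize by $\mu$ and then argue that $u_{i_0}=\varphi^{-1}(\mu\varphi(1))$ sweeps out $\mathbb{F}_{q^m}$; the coordinate bookkeeping you flag as the possible obstacle is exactly the step the paper also relies on, and your justification of it is sound.
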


\begin{proof}
\begin{itemize}
\item[(1)] Let $\bm{g}_i$ be the $i$-th row vector of $G$, then $\varphi(\alpha\bm{g}_i)=\varphi(\alpha)\bm{g}_i$ holds for any $\alpha\in\mathbb{F}_{q^m}$. For any $\bm{c}\in\mathcal{C}$, there exists $\bm{\lambda}=(\lambda_1,\cdots,\lambda_k)\in\mathbb{F}_{q^m}^k$ such that $\bm{c}=\bm{\lambda}G$. Then we have
\begin{align*}
\varphi(\bm{c})=\varphi(\bm{\lambda}G)=\varphi(\sum_{i=1}^k\lambda_i\bm{g}_i)=\sum_{i=1}^k\varphi(\lambda_i\bm{g}_i)=\sum_{i=1}^k\varphi(\lambda_i)\bm{g}_i\in\mathcal{C},
\end{align*}
which suggests that $\varphi(\mathcal{C})\subseteq\mathcal{C}$. Together with $|\varphi(\mathcal{C})|=|\mathcal{C}|$, there will be $\varphi(\mathcal{C})=\mathcal{C}$.
\item[(2)] With the necessity being obvious, it suffices to prove the sufficiency. Without loss of generality, we consider the first row vector of $G$ and assume that $\bm{g}_1=(1,0,\cdots,0,\alpha,\star)\in\mathbb{F}_{q^m}^n$, where $\alpha\in\mathbb{F}_{q^m}$ is a polynomial element and ``$\star$" represents some vector in $\mathbb{F}_{q^m}^{n-k-1}$. Note that $\varphi$ is linear over $\mathcal{C}$, or equivalently $\varphi(\mathcal{C})$ is an $\mathbb{F}_{q^m}$-linear code. Apparently $\varphi(\mathcal{C})$ has $\varphi(G)$ as a generator matrix, which implies that there exists $\bm{\lambda}=(\lambda_1,\cdots,\lambda_k)\in\mathbb{F}_{q^m}^k$ such that $\varphi(\beta\bm{g}_1)=\bm{\lambda}\varphi(G)$ for any $\beta\in\mathbb{F}_{q^m}$. It is clear that $\lambda_1\in\mathbb{F}_{q^m}^*$ and $\lambda_i=0$ for $2\leqslant i\leqslant k$, which means $\varphi(\beta\bm{g}_1)$ and $\varphi(\bm{g}_1)$ are linearly dependent over $\mathbb{F}_{q^m}$. Then we can deduce that $(\varphi(\beta),\varphi(\alpha\beta))=\lambda_1(\varphi(1),\varphi(\alpha))$ and furthermore $\varphi(1)\varphi(\alpha\beta)=\varphi(\alpha)\varphi(\beta)$. Let $\gamma=\frac{\varphi(\alpha)}{\varphi(1)}$, then 
\[\varphi(\alpha\beta)=\frac{\varphi(\alpha)}{\varphi(1)}\varphi(\beta)=\gamma\varphi(\beta).\] 
Because of $\alpha$ being a polynomial element, $\bm{a}=(1,\alpha,\cdots,\alpha^{m-1})\in\mathbb{F}_{q^m}^m$ forms a basis vector of $\mathbb{F}_{q^m}$ over $\mathbb{F}_q$. Following this, we have 
\[\varphi(\alpha\bm{a})=(\varphi(\alpha),\cdots,\varphi(\alpha^m))=(\gamma\varphi(1),\cdots,\gamma\varphi(\alpha^{m-1}))=\gamma\varphi(\bm{a}),\] 
and furthermore $\varphi(\alpha^i\bm{a})=\gamma^i\varphi(\bm{a})$ for $0\leqslant i\leqslant m-1$. By Theorem \ref{theorem}, we have that $\varphi$ forms a fully linear transformation over $\mathbb{F}_{q^m}$.
\end{itemize}
\end{proof}

\begin{corollary}
Let $m$ be a prime and $\mathcal{S}$ be defined as above in Theorem \ref{theorem1}. If there exists $\alpha\in\mathcal{S}$ such that $\alpha\notin\mathbb{F}_q$, then any semilinear transformation $\varphi$ over $\mathbb{F}_{q^m}/\mathbb{F}_q$ is fully linear if and only if $\varphi$ is linear over $\mathcal{C}$. 
\end{corollary}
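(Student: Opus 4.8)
The plan is to deduce the statement directly from Theorem \ref{theorem1}(2) once we observe that, when $m$ is prime, any element of $\mathbb{F}_{q^m}$ lying outside $\mathbb{F}_q$ is automatically a polynomial element. So the real content is a short field-theoretic remark, after which the corollary is immediate.

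First I would invoke the tower law. For the element $\alpha\in\mathcal{S}$ with $\alpha\notin\mathbb{F}_q$, consider the intermediate field $\mathbb{F}_q(\alpha)$, so that $\mathbb{F}_q\subseteq\mathbb{F}_q(\alpha)\subseteq\mathbb{F}_{q^m}$. Then $[\mathbb{F}_q(\alpha):\mathbb{F}_q]$ divides $[\mathbb{F}_{q^m}:\mathbb{F}_q]=m$, and since $m$ is prime this degree is either $1$ or $m$. The hypothesis $\alpha\notin\mathbb{F}_q$ excludes the value $1$, hence $[\mathbb{F}_q(\alpha):\mathbb{F}_q]=m$; equivalently, the minimal polynomial of $\alpha$ over $\mathbb{F}_q$ has degree $m$. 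This means $(1,\alpha,\cdots,\alpha^{m-1})$ is linearly independent over $\mathbb{F}_q$ and therefore a basis vector of $\mathbb{F}_{q^m}$ over $\mathbb{F}_q$, i.e. $\alpha$ is a polynomial element in the sense of Section \ref{section2}.

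With this observation, the set $\mathcal{S}$ of Theorem \ref{theorem1} contains a polynomial element, so the hypothesis of Theorem \ref{theorem1}(2) is fulfilled. Applying that theorem, an $\mathbb{F}_q$-linear transformation $\varphi$ over $\mathbb{F}_{q^m}$ is fully linear if and only if $\varphi$ is linear on $\mathcal{C}$; in particular, if $\varphi$ is linear on $\mathcal{C}$ then $\varphi$ is fully linear, which is exactly the claim. There is no genuine obstacle in this argument beyond placing the primality of $m$ at the right step: the only point requiring care is that $\alpha\notin\mathbb{F}_q$ is precisely what rules out the degree-$1$ case, so that $\alpha$ is forced to have degree $m$, matching the ``polynomial element'' hypothesis of Theorem \ref{theorem1}(2).
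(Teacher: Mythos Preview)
Your proposal is correct and follows essentially the same approach as the paper's own proof: both observe that when $m$ is prime every $\alpha\in\mathbb{F}_{q^m}\setminus\mathbb{F}_q$ is a polynomial element, and then invoke Theorem~\ref{theorem1}(2). The only difference is that you spell out the tower-law justification for this field-theoretic fact, whereas the paper states it without further comment.
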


\begin{proof}
Note that $m$ is a prime, then any $\alpha\in\mathbb{F}_{q^m}\backslash\mathbb{F}_q$ is a polynomial element of $\mathbb{F}_{q^m}$ over $\mathbb{F}_q$. Hence the conclusion is proved immediately from Theorem \ref{theorem1}.
\end{proof}

\subsection{Linearized permutation polynomials}\label{section4.2}
Note that an $\mathbb{F}_q$-linear automorphism of $\mathbb{F}_{q^m}$ actually determines a permutation of $\mathbb{F}_{q^m}$, which can be described by a permutation polynomial from Lagrange interpolation. Because of the $\mathbb{F}_q$-linearity, these permutation polynomials are called linearized and have pretty good properties. A linearized polynomial over $\mathbb{F}_{q^m}$ with respect to $\mathbb{F}_q$ is a polynomial of the form $L(x)=\sum_{i=0}^{m-1}a_ix^{[i]}\in\mathbb{F}_{q^m}[x]$. It is easy to verify that, for any $\alpha,\beta\in\mathbb{F}_{q^m}$ and $\lambda\in\mathbb{F}_q$, $L(x)$ has the following properties
\begin{align*}
L(\alpha+\beta)&=L(\alpha)+L(\beta),\\
L(\lambda\alpha)&=\lambda L(\alpha).
\end{align*}

For any $\varphi\in\aut$, we now construct a linearized polynomial $L_\varphi(x)\in\mathbb{F}_{q^m}[x]$ such that $L_\varphi(\alpha)=\varphi(\alpha)$ for any $\alpha\in\mathbb{F}_{q^m}$. Let $\bm{b}=(\beta_1,\ldots,\beta_m)$ be a basis vector of $\mathbb{F}_{q^m}$ over $\mathbb{F}_q$. If $L_\varphi(\beta_i)=\varphi(\beta_i)$ for $1\leqslant i\leqslant m$, then $L_\varphi(\alpha)=\varphi(\alpha)$ for any $\alpha\in\mathbb{F}_{q^m}$. Let $B\in\mathcal{M}_{m,m}(\mathbb{F}_{q^m})$ be a Moore matrix generated by $\bm{b}$, then $(a_0,\ldots,a_{m-1})B=\varphi(\bm{b})$ and hence $(a_0,\ldots,a_{m-1})=\varphi(\bm{b})B^{-1}$. This implies that a semilinear transformation in $\aut$ always leads to a linearized polynomial. But the opposition is not necessarily the case since a linearized polynomial may not induce a permutation of $\mathbb{F}_{q^m}$, such as the Trace function $\tl{Tr}(x)=\sum_{i=0}^{m-1}x^{[i]}$.

For any $\varphi\in\aut$, let $L_\varphi(x)\in\mathbb{F}_{q^m}[x]$ be the linearized permutation polynomial induced by $\varphi$. The following proposition states a fact that one can figure out the nonlinearity of $\varphi$ directly from the number of nonzero coefficients of $L_\varphi(x)$.

\begin{proposition}\label{proposition9}
For any $\varphi\in\aut$, let $L_\varphi(x)=\sum_{i=0}^{m-1}a_ix^{[i]}$ be a linearized permutation polynomial associated to $\varphi$. Let $\bm{a}=(a_0,\ldots,a_{m-1})$, then $\tl{NL}_m(\varphi)=\frac{w}{m}$ where $w=\tl{wt}_H(\bm{a})$.
\end{proposition}
\begin{proof}
Let $w=\tl{wt}_H(\bm{a})$, then there exist $0\leqslant j_0<\cdots<j_{w-1}\leqslant m-1$ such that $a_{j_v}\neq 0$. Let $\bm{b}=(\beta_0,\ldots,\beta_{m-1})$ be a basis vector of $\mathbb{F}_{q^m}$ over $\mathbb{F}_q$, and set
\[B=
\begin{pmatrix}
\beta_0\beta_0&\beta_0\beta_1&\cdots&\beta_0\beta_{m-1}\\
\beta_1\beta_0&\beta_1\beta_1&\cdots&\beta_1\beta_{m-1}\\
\vdots&\vdots&&\vdots\\
\beta_{m-1}\beta_0&\beta_{m-1}\beta_1&\cdots&\beta_{m-1}\beta_{m-1}
\end{pmatrix}=
\begin{pmatrix}
\beta_0\bm{b}\\
\beta_1\bm{b}\\
\vdots\\
\beta_{m-1}\bm{b}
\end{pmatrix}.
\]
Let $A=\varphi(B)$, then 
\[A=L_\varphi(B)=a_{j_0}B^{[j_0]}+a_{j_1}B^{[j_1]}+\cdots+a_{j_{w-1}}B^{[j_{w-1}]}.\] 
It is easy to see that $\rank(A)\leqslant w$ because of $\rank(B^{[j_v]})=1$ for any $0\leqslant v\leqslant w-1$. Let
\[\Lambda=
\begin{pmatrix}
\beta_0^{[j_0]}&\beta_0^{[j_1]}&\cdots&\beta_0^{[j_{w-1}]}\\
\beta_1^{[j_0]}&\beta_1^{[j_1]}&\cdots&\beta_1^{[j_{w-1}]}\\
\vdots&\vdots&&\vdots\\
\beta_{m-1}^{[j_0]}&\beta_{m-1}^{[j_1]}&\cdots&\beta_{m-1}^{[j_{w-1}]}\\
\end{pmatrix}.
\]
It is clear that $\rank(\Lambda)=w$, then there exist $1\leqslant i_0<\cdots<i_{w-1}\leqslant m$ such that the submatrix of $\Lambda$ from the rows indexed by $i_u$ is invertible. Let $I_u=\{i_0,\ldots,i_{w-1}\}$, then by $\Lambda_{I_u}$ we denote the submatrix of $\Lambda$ indexed by $I_u$, and $A_{I_u}$ the submatrix of $A$ respectively. Then 
\[\Lambda_{I_u}^{-1}A_{I_u}=
a_{j_0}\Lambda_{I_u}^{-1}
\begin{pmatrix}
\beta_{i_0}^{[j_0]}\bm{b}^{[j_0]}\\
\beta_{i_1}^{[j_0]}\bm{b}^{[j_0]}\\
\vdots\\
\beta_{i_{w-1}}^{[j_0]}\bm{b}^{[j_0]}\\
\end{pmatrix}+\cdots+a_{j_{w-1}}\Lambda_{I_u}^{-1}
\begin{pmatrix}
\beta_{i_0}^{[j_{w-1}]}\bm{b}^{[j_{w-1}]}\\
\beta_{i_1}^{[j_{w-1}]}\bm{b}^{[j_{w-1}]}\\
\vdots\\
\beta_{i_{w-1}}^{[j_{w-1}]}\bm{b}^{[j_{w-1}]}\\
\end{pmatrix}=
\begin{pmatrix}
a_{j_0}\bm{b}^{[j_0]}\\
a_{j_1}\bm{b}^{[j_1]}\\
\vdots\\
a_{j_{w-1}}\bm{b}^{[j_{w-1}]}\\
\end{pmatrix}.
\]
It follows that $w=\rank(\Lambda_{I_u}^{-1}A_{I_u})\leqslant\rank(A)\leqslant w$, which leads to the conclusion immediately.
\end{proof}

\begin{remark}
It is clear that $\varphi$ is fully linear if and only if the linearized polynomial $L_\varphi(x)$ induced by $\varphi$ has only one nonzero term, i.e. $L_\varphi(x)=ax^{[i]}$ for some $a\in\mathbb{F}_{q^m}^*$ and $0\leqslant i\leqslant m-1$. This accords with the statement in Remark \ref{remark}.
\end{remark}

\section{Our proposal}\label{section5}
In this section, we first give a formal description of our new proposal and then discuss how to choose the private key to avoid some potential structural weakness.

\subsection{Description of our proposal}\label{section5.1}
For a desired security level, choose a field $\mathbb{F}_q$ and positive integers $m,n,k,l,\lambda_1$ and $\lambda_2$ such that $n=lm$. Let $\bm{g}=(g^{[n-1]},g^{[n-2]},\cdots,g)\in\mathbb{F}_{q^n}^n$ be a normal basis vector of $\mathbb{F}_{q^n}$ over $\mathbb{F}_q$, and $G= \tl{PC}_k(\bm{g})\in\mathcal{M}_{k,n}(\mathbb{F}_{q^n})$ a partial circulant matrix generated by $\bm{g}$. Let $\mathcal{G}=\langle G\rangle_{q^n}$ be an $[n,k]$ partial cyclic Gabidulin code. Our proposal consists of the following three procedures.
\begin{itemize}
\item Key generation
\item[]For $i=1,2$, randomly choose an $\mathbb{F}_q$-linear space $\mathcal{V}_i\subseteq\mathbb{F}_{q^n}$ such that $\dim_q(\mathcal{V}_i)=\lambda_i$. Randomly choose $\bm{m}_i\in\mathcal{V}_i^n$ such that $\tl{wt}_R(\bm{m}_i)=\lambda_i$. Let $M_i= \tl{PC}_n(\bm{m}_i)$ and check whether $M_i$ is invertible or not. If not, then rechoose $\bm{m}_i$. Randomly choose a semilinear transformation $\varphi$ over $\mathbb{F}_{q^n}/\mathbb{F}_{q^m}$ such that $\tl{NL}_l(\varphi)\neq \frac{1}{l}$. Let $\bm{g}^*=\varphi(\bm{g}M_1^{-1})M_2^{-1}$, then $\tl{PC}_k(\bm{g}^*)=\varphi(GM_1^{-1})M_2^{-1}$. Let $t=\big\lfloor\frac{n-k}{2\lambda_1\lambda_2}\big\rfloor$, then the public key is published as $(\bm{g}^*,t)$, and the private key is $(\bm{m}_1,\bm{m}_2,\varphi)$.
\item Encryption
\item[]For a plaintext $\bm{x}\in\mathbb{F}_{q^m}^k$, randomly choose $\bm{e}\in\mathbb{F}_{q^n}^n$ with $\tl{wt}_R(\bm{e})=t$. Then the ciphertext corresponding to $\bm{x}$ is computed as 
\[\bm{y}=\bm{x} \tl{PC}_k(\bm{g}^*)+\bm{e}=\bm{x}\varphi(GM_1^{-1})M_2^{-1}+\bm{e}.\]
\item Decryption
\item[]For a ciphertext $\bm{y}\in\mathbb{F}_{q^n}^n$, let $G= \tl{PC}_k(\bm{g}), M_1= \tl{PC}_n(\bm{m}_1)$, and $M_2= \tl{PC}_n(\bm{m}_2)$, then compute 
\[\bm{y}M_2=\bm{x}\varphi(GM_1^{-1})+\bm{e}M_2=\varphi(\bm{x}GM_1^{-1})+\bm{e}M_2,\]  
and 
\[\bm{y}'=\varphi^{-1}(\bm{y}M_2)M_1=\bm{x}G+\varphi^{-1}(\bm{e}M_2)M_1.\] 
Let $\bm{e}'=\varphi^{-1}(\bm{e}M_2)M_1$, then 
\[\tl{wt}_R(\bm{e}')\leqslant \tl{wt}_R(\varphi^{-1}(\bm{e}M_2))\cdot\lambda_1= \tl{wt}_R(\bm{e}M_2)\cdot\lambda_1\leqslant \tl{wt}_R(\bm{e})\cdot\lambda_2\cdot\lambda_1\leqslant \big\lfloor\frac{n-k}{2}\big\rfloor.\] 
Applying the fast decoder of $\mathcal{G}$ to $\bm{y}'$ reveals the error vector $\bm{e}'$, then one can recover $\bm{x}$ by solving the linear system $\bm{x}G=\bm{y}'-\bm{e}'$ with $\mathcal{O}(n^3)$ operations in $\mathbb{F}_{q^n}$.
\end{itemize}

\subsection{A note on the underlying Gabidulin code}
Now we explain why the underlying Gabidulin code is not used as part of the private key. Firstly, we need to introduce the following proposition, which reveals the relationship between two normal basis vectors.

\begin{proposition}\label{proposition7}
Let $\alpha$ be a normal element of $\mathbb{F}_{q^n}$ over $\mathbb{F}_q$, then $\beta\in\mathbb{F}_{q^n}$ is normal if and only if there exists $Q\in \tl{PC}_n(\mathbb{F}_q)\cap \tl{GL}_n(\mathbb{F}_q)$ such that 
\[(\beta^{[n-1]},\beta^{[n-2]},\cdots,\beta)=(\alpha^{[n-1]},\alpha^{[n-2]},\cdots,\alpha)Q.\]
\end{proposition}

\begin{proof}
Trivial from a straightforward verification.
\end{proof}

Note that keeping $G= \tl{PC}_k(\bm{g})$ secret cannot strengthen security of the cryptosystem. Let $\bm{g}'\in\mathbb{F}_{q^n}^n$ be another normal basis vector of $\mathbb{F}_{q^n}$ over $\mathbb{F}_q$. By Proposition \ref{proposition7}, there exists a matrix $Q\in \tl{PC}_n(\mathbb{F}_q)\cap \tl{GL}_n(\mathbb{F}_q)$ such that $\bm{g}=\bm{g}'Q$. Let $G'= \tl{PC}_k(\bm{g}')$, then $G=G'Q$ and 
\[\varphi(GM_1^{-1})M_2^{-1}=\varphi(G'QM_1^{-1})M_2^{-1}=\varphi(G'M_1^{-1})QM_2^{-1}=\varphi(G'M_1^{-1}){M'_2}^{-1},\]
where $M'_2=M_2Q^{-1}\in \tl{PC}_n(\mathbb{F}_{q^n})\cap \tl{GL}_n(\mathbb{F}_{q^n})$ satisfying $\tl{wt}_R(M'_2)=\lambda_2$. Furthermore, it is easy to verify that anyone possessing the knowledge of $\varphi,\bm{g}',M_1$ and $M'_2$ can decrypt any ciphertext in polynomial time. This implies that breaking this cryptosystem can be reduced to recovering $\varphi,M_1$ and $M'_2$. Hence we conclude that it does not make a difference to keep the underlying Gabidulin code secret.

\subsection{On the choice of $\varphi$}
We first explain why the secret transformation $\varphi$ cannot be fully linear over $\mathbb{F}_{q^n}$, then investigate the equivalence between different semilinear transformations. 

\subsubsection{Why $\varphi$ cannot be fully linear}
Assume that $\varphi$ is fully linear over $\mathbb{F}_{q^n}$ with respect to $\mathbb{F}_{q^m}$, then by Remark \ref{remark} there exist $\beta\in\mathbb{F}_{q^n}^*$ and $0\leqslant j\leqslant l-1$ such that
\[\varphi(GM_1^{-1})=\beta (GM_1^{-1})^{[mj]}=\beta G^{[mj]}(M_1^{-1})^{[mj]}=\beta G^{[mj]}(M_1^{[mj]})^{-1}.\]
It follows that
\[\varphi(GM_1^{-1})M_2^{-1}=\beta G^{[mj]}(M_1^{[mj]})^{-1} M_2^{-1}=G^{[mj]}(\beta^{-1}M_2M_1^{[mj]})^{-1}=G'{M'}^{-1},\]
where $G'=G^{[mj]}$ and $M'=\beta^{-1}M_2M_1^{[mj]}$. Note that $\langle G'\rangle_{q^n}$ forms an $[n,k]$ partial cyclic Gabidulin code due to $G'= \tl{PC}_k(\bm{g}^{[mj]})$ and Remark \ref{remark1}. It is clear that $\tl{wt}_R(M')\leqslant\lambda_1\lambda_2$ and one can decrypt any ciphertext with the knowledge of $(G',M')$. This suggests that the cryptosystem degenerates into a sub-instance of Loidreau's cryptosystem \cite{loidreau2017new}, which has been completely broken in some cases \cite{coggia2020security,2020Extending,analysis2021pham}. Hence choosing $\varphi$ to be fully linear over $\mathbb{F}_{q^n}$ is not a smart idea.

Furthermore, we hope that $\varphi(\langle GM_1^{-1}\rangle_{q^n})$ does not preserve the $\mathbb{F}_{q^n}$-linearity. According to our experimental results in MAGMA \cite{bosma1997thema}, the systematic form of $GM_1^{-1}$ always has entries that serve as polynomial elements of $\mathbb{F}_{q^n}$ over $\mathbb{F}_{q^m}$. To generate the secret $\varphi$ that is semilinear over $\langle GM_1^{-1}\rangle_{q^n}$, it sufficies to choose a semilinear transformation over $\mathbb{F}_{q^n}$ due to Theorem \ref{theorem1}.

\subsubsection{Equivalence of semilinear transformations}
For any $\beta\in\mathbb{F}_{q^n}^*$ and a semilinear transformation $\varphi$ over $\mathbb{F}_{q^n}$ with respect to $\mathbb{F}_{q^m}$, it is easy to verify that $\beta\varphi$ is also a semilinear transformation, where $\beta\varphi$ is defined as $\beta\varphi(\alpha)=\beta\cdot\varphi(\alpha)$ for any $\alpha\in\mathbb{F}_{q^n}$. Furthermore, let $\varphi'=\beta\varphi$ and $M'_2=\beta M_2$, then we have $\tl{wt}_R(M'_2)=\tl{wt}_R(M_2)=\lambda_2$ and
\[\varphi(GM_1^{-1})M_2^{-1}=\beta^{-1}\varphi'(GM_1^{-1})M_2^{-1}=\varphi'(GM_1^{-1})(\beta M_2)^{-1}=\varphi'(G{M_1}^{-1}){M'_2}^{-1}.\]
From the perspective of brute-force attack, we say that $\varphi$ and $\varphi'$ are equivalent. We define $\overline{\varphi}=\{\beta\varphi:\beta\in\mathbb{F}_{q^n}^*\}$, called the equivalent class of $\varphi$. For any two transformations $\varphi_1$ and $\varphi_2$, apparently we have either $\overline{\varphi_1}=\overline{\varphi_2}$ or $\overline{\varphi_1}\cap\overline{\varphi_2}=\varnothing$.

Now we make an estimation on the number of nonequivalent semilinear transformations. By Proposition \ref{proposition6}, the number of $\mathbb{F}_{q^m}$-linear automorphisms of $\mathbb{F}_{q^n}$ can be computed as 
\[|\tl{Aut}_{\mathbb{F}_{q^n}/\mathbb{F}_{q^m}}|=\prod_{i=0}^{l-1}(q^n-q^{mi}).\] 
By Theorem \ref{theorem2}, the number of fully linear transformations over $\mathbb{F}_{q^n}$ with respect to $\mathbb{F}_{q^m}$ is $l(q^n-1)$. Denote by $\mathcal{N}(\overline{\varphi})$ the number of nonequivalent semilinear transformations, then
\[\mathcal{N}(\overline{\varphi})=\frac{|\tl{Aut}_{\mathbb{F}_{q^n}/\mathbb{F}_{q^m}}|-l(q^n-1)}{q^n-1}=\prod_{i=1}^{l-1}(q^n-q^{mi})-l\approx q^{(l-1)n}.\]

\subsection{On the choice of $(\bm{m}_1,\bm{m}_2)$}\label{section5.3}
In this subsection, we first investigate how to choose $(\bm{m}_1,\bm{m}_2)$ to avoid some structural weakness, then investigate the equivalence of $\bm{m}_1$'s, and lastly disscuss how to generate $(\bm{m}_1,\bm{m}_2)$ in an efficient way.

\subsubsection{How to choose $(\bm{m}_1,\bm{m}_2)$}
We point out that neither $\bm{m}_1$ or $\bm{m}_2$ should be taken over $\mathbb{F}_{q^m}$, otherwise the proposed cryptosystem will degenerate into a weak scheme. This problem is investigated in the following two cases.

\begin{itemize}
\item[(1)]If $\bm{m}_1\in\mathbb{F}_{q^m}^n$, then there will be $M_1,M_1^{-1}\in \tl{GL}_n(\mathbb{F}_{q^m})$. Following this, we have
\[\varphi(GM_1^{-1})M_2^{-1}=\varphi(G)M_1^{-1}M_2^{-1}=\varphi(G)(M_1M_2)^{-1}=\varphi(G)M^{-1},\]
where $M=M_1M_2$ satisfying $\tl{wt}_R(M)\leqslant \tl{wt}_R(M_1)\cdot \tl{wt}_R(M_2)=\lambda_1\lambda_2$. A straightforward verification shows that if one can recover $\varphi$ and $M$, then one can decrypt any valid ciphertext in polynomial time. Let $G'_{pub}= \tl{PC}_n(\bm{g}^*)$ and $G'= \tl{PC}_n(\bm{g})$, then it can be verified that $G'_{pub}=\varphi(G')M^{-1}$. If one can manage to find $\varphi$, then one can recover $M$ by computing ${G'_{pub}}^{-1}\varphi(G')$. This suggests that breaking this cryptosystem can be reduced to finding the secret $\varphi$.

\item[(2)]If $\bm{m}_2\in\mathbb{F}_{q^m}^n$, then there will be $M_2,M_2^{-1}\in \tl{GL}_n(\mathbb{F}_{q^m})$. Furthermore, we have
\[\varphi(GM_1^{-1})M_2^{-1}=\varphi(GM_1^{-1}M_2^{-1})=\varphi(GM^{-1}),\]
where $M=M_1M_2$ satisfying $\tl{wt}_R(M)\leqslant \tl{wt}_R(M_1)\cdot \tl{wt}_R(M_2)=\lambda_1\lambda_2$. Similarly, a straightforward verification shows that one can decrypt any valid ciphertext with the knowledge of $\varphi,G$ and $M$. If one can manage to find $\varphi$, then one can recover $GM^{-1}$ and then $M$ using a similar method as above. This suggests that breaking this cryptosystem can be reduced to finding the secret $\varphi$.
\end{itemize}

\subsubsection{Equivalence of $\bm{m}_1$'s}\label{subsubsection5.4.2}
For a matrix $Q\in \tl{PC}_n(\mathbb{F}_q)\cap \tl{GL}_n(\mathbb{F}_q)$, let $M'_1=M_1Q$ and $M_2'=M_2Q$, then $\tl{wt}_R(M'_1)=\tl{wt}_R(M_1)$ and $\tl{wt}_R(M'_2)=\tl{wt}_R(M_2)$. Following this, we have
\[\varphi(G{M'_1}^{-1})M_2^{-1}=\varphi(GQ^{-1}M_1^{-1}){M'_2}^{-1}=\varphi(GM_1^{-1})Q^{-1}M_2^{-1}=\varphi(G{M_1}^{-1}){M'_2}^{-1}.\]
From the perspective of a brute-force attack against $\bm{m}_1$, it does not make a difference to multiply $\bm{m}_1$ with a matrix in $\tl{PC}_n(\mathbb{F}_q)\cap \tl{GL}_n(\mathbb{F}_q)$. We define $\overline{\bm{m}}_1=\{\bm{m}_1Q:Q\in \tl{PC}_n(\mathbb{F}_q)\cap \tl{GL}_n(\mathbb{F}_q)\}$, called the equivalent class of $\bm{m}_1$.

Now we make an estimation on the number of nonequivalent $\overline{\bm{m}}_1$'s. For a positive integer $\lambda<n$, let $\mathcal{V}\subseteq\mathbb{F}_{q^n}$ be an $\mathbb{F}_q$-linear space of dimension $\lambda$. For a matrix $M\in \tl{PC}_n(\mathcal{V})\cap \tl{GL}_n(\mathcal{V})$ with $\tl{wt}_R(M)=\lambda$, there exists a decomposition $M=\sum_{j=1}^\lambda\alpha_jA_j$, where $\alpha_j$'s form a basis of $\mathcal{V}$ over $\mathbb{F}_q$ and $A_j$'s are nonzero matrices in $ \tl{PC}_n(\mathbb{F}_q)$. Let $A$ be a matrix whose $j$-th row is exactly the first row of $A_j$, then $A\in\mathcal{M}_{\lambda,n}(\mathbb{F}_q)$ must be of full rank. Denote by $\mathcal{N}(A)$ the number of full-rank matrices in $\mathcal{M}_{\lambda,n}(\mathbb{F}_q)$, and by $\mathcal{N}(\mathcal{V})$ the number of $\lambda$-dimensional $\mathbb{F}_q$-subspaces of $\mathbb{F}_{q^n}$, then
\[\mathcal{N}(A)=\prod_{i=0}^{\lambda-1}(q^n-q^i)\tl{ and }\mathcal{N}(\mathcal{V})=\prod_{j=0}^{\lambda-1}\frac{q^n-q^j}{q^\lambda-q^j}\approx q^{\lambda n-\lambda^2}.\] 
The number of matrices $M\in \tl{PC}_n(\mathbb{F}_{q^n})\cap \tl{GL}_n(\mathbb{F}_{q^n})$ with $\tl{wt}_R(M)=\lambda$ can be evaluated as
\begin{align*}
\mathcal{N}(M)=\mathcal{N}(\mathcal{V})\cdot\mathcal{N}(A)\cdot \xi\approx \xi q^{(2\lambda-1)n-\lambda^2},
\end{align*} 
where $\xi$ denotes the probability of a random $M\in \tl{PC}_n(\mathbb{F}_{q^n})$ with $\tl{wt}_R(M)=\lambda$ being invertible. As for $\xi$, we have the following proposition.
\begin{proposition}\label{proposition8}
If $q^\lambda-q^{\lambda-1}\geqslant 2n$, then $\xi\geqslant \frac{1}{2}$.
\end{proposition}

\begin{proof}
For a $\lambda$-dimensional $\mathbb{F}_q$-linear space $\mathcal{V}\subseteq\mathbb{F}_{q^n}$, denote by $\mathcal{M}_\lambda(\mathcal{V})$ the set of all matrices with rank weight $\lambda$ in $ \tl{PC}_n(\mathcal{V})$. Let $U$ be the set of all singular matrices in $\mathcal{M}_\lambda(\mathcal{V})$, and $V=\mathcal{M}_\lambda(\mathcal{V})\cap \tl{GL}_n(\mathcal{V})$. In what follows, we will construct an injective mapping $\sigma$ from $U$ to $V$. First, we divide $U$ into a certain number of subsets. For a matrix $M\in U$, let $\bm{m}=(m_0,m_1,\cdots,m_{n-1})\in\mathcal{V}^n$ be the first row vector of $M$, namely $M= \tl{PC}_n(\bm{m})$. Let $\overline{M}=\{N\in U: M-N \tl{ is a scalar matrix}\}$, a set of matrices in $U$ that resemble $M$ at the last $n-1$ coordinates. Let $\bm{x}=(x,m_1,\cdots,m_{n-1})$, and $X= \tl{PC}_n(\bm{x})$. Denote by $f(x)\in\mathbb{F}_{q^n}[x]$ the determinant of $X$, then $f(x)$ is a polynomial of degree $n$. In the meanwhile, we have that $|\overline{M}|$ equals the number of roots of $f(x)=0$ in $\mathcal{V}$, which indicates that $|\overline{M}|\leqslant n$. Let $\bm{m}^*=(m_1,\cdots,m_{n-1})$, then it is easy to see that $\tl{wt}_R(\bm{m}^*)\geqslant \lambda-1$. Now we establish the mapping $\sigma$ in the following two cases:
\begin{itemize}
\item[(1)] $\tl{wt}_R(\bm{m}^*)=\lambda-1$. 

For a matrix $M_1\in\overline{M}$, let $\bm{m}_1=(\delta_1,\bm{m}^*)$ be the first row vector of $M_1$. Let $\mathcal{W}=\langle m_1,\cdots,m_{n-1}\rangle_q$, then $\dim_q(\mathcal{W})=\lambda-1$. Because of $q^\lambda-q^{\lambda-1}>n$, there exists $\delta'_1\in\mathcal{V}\backslash \mathcal{W}$ such that $f(\delta'_1)\neq 0$, where $f(x)$ is defined as above. Let $\bm{m}'_1=(\delta'_1,\bm{m}^*)$, then we have $M'_1= \tl{PC}_n(\bm{m}'_1)\in \tl{GL}_n(\mathcal{V})$, and $\tl{wt}_R(\bm{m}'_1)=\lambda$ in the meanwhile. We define $\sigma(M_1)=M'_1$.

For $2\leqslant i\leqslant n$ and a matrix $M_i\in\overline{M}\backslash\{M_j\}_{j=1}^{i-1}$, if any, let $\bm{m}_i=(\delta_i,\bm{m}^*)$ be the first row vector of $M_i$. Because of $q^\lambda-q^{\lambda-1}-(i-1)>n$, there exists $\delta'_i\in\mathcal{V}\backslash (\mathcal{W}\cup\{\delta'_j\}_{j=1}^{i-1})$ such that $f(\delta'_i)\neq 0$. Let $\bm{m}'_i=(\delta'_i,\bm{m}^*)$, then we have $M'_i= \tl{PC}_n(\bm{m}'_i)\in \tl{GL}_n(\mathcal{V})$, and $\tl{wt}_R(\bm{m}'_i)=\lambda$ in the meanwhile. We define $\sigma(M_i)=M'_i$.

\item[(2)] $\tl{wt}_R(\bm{m}^*)=\lambda$. 

For a matrix $M_1\in\overline{M}$, let $\bm{m}_1=(\delta_1,\bm{m}^*)$ be the first row vector of $M_1$. Because of $q^\lambda>n$, there exists $\delta'_1\in\mathcal{V}$ such that $f(\delta'_1)\neq 0$, where $f(x)$ is defined as above. Let $\bm{m}'_1=(\delta'_1,\bm{m}^*)$, then we have $M'_1= \tl{PC}_n(\bm{m}'_1)\in \tl{GL}_n(\mathcal{V})$, and $\tl{wt}_R(\bm{m}'_1)=\lambda$ in the meanwhile. We define $\sigma(M_1)=M'_1$.

For $2\leqslant i\leqslant n$ and a matrix $M_i\in\overline{M}\backslash\{M_j\}_{j=1}^{i-1}$, if any, let $\bm{m}_i=(\delta_i,\bm{m}^*)$ be the first row vector of $M_i$. Because of $q^\lambda-(i-1)>n$, there exists $\delta'_i\in\mathcal{V}\backslash \{\delta'_j\}_{j=1}^{i-1}$ such that $f(\delta'_i)\neq 0$. Let $\bm{m}'_i=(\delta'_i,\bm{m}^*)$, then we have $M'_i= \tl{PC}_n(\bm{m}'_i)\in \tl{GL}_n(\mathcal{V})$, and $\tl{wt}_R(\bm{m}'_i)=\lambda$ in the meanwhile. We define $\sigma(M_i)=M'_i$.
\end{itemize}
It is easy to see that $\sigma$ forms an injective mapping from $U$ to $V$. Apparently $\sigma(U)=\{\sigma(M):M\in U\}\subseteq V$, which implies that $|U|=|\sigma(U)|\leqslant |V|$. Together with $U\cap V=\varnothing$ and $\mathcal{M}_\lambda(\mathcal{V})=U\cup V$, we have that 
\[\xi=\sum_{\substack{\mathcal{V}\subseteq\mathbb{F}_{q^n},\dim_q(\mathcal{V})=\lambda}}|V|\bigg/\sum_{\substack{\mathcal{V}\subseteq\mathbb{F}_{q^n},\dim_q(\mathcal{V})=\lambda}}|\mathcal{M}_\lambda(\mathcal{V})|\geqslant \frac{1}{2}.\]
\end{proof}

\begin{remark}
Proposition \ref{proposition8} provides a sufficient condition for $\xi\geqslant \frac{1}{2}$. Actually, this inequality always holds according to our extensive experiments in MAGMA, even when the sufficient condition is not satisfied. Hence we suppose that $\xi=\frac{1}{2}$ in practice. Finally, the number of nonequivalent $\overline{\bm{m}}_1$'s can be evaluated as 
\[\mathcal{N}(\overline{\bm{m}}_1)=\frac{\mathcal{N}(M_1)}{| \tl{PC}_n(\mathbb{F}_q)\cap \tl{GL}_n(\mathbb{F}_q)|}\approx \frac{q^{2\lambda_1n-\lambda_1^2}}{\Phi_q(n)},\]
where $\Phi_q(n)$ is defined as in Proposition \ref{proposition4}.
\end{remark}

\subsubsection{How to generate $(\bm{m}_1,\bm{m}_2)$ efficiently}
Now we discuss how to generate in an efficient way the secret vector $\bm{m}_i\in \mathbb{F}_{q^n}^n$ such that $\tl{wt}_R(\bm{m}_i)=\lambda_i$ and $\tl{PC}_n(\bm{m}_i)\in GL_n(\mathbb{F}_{q^n})$. Let $M=\sum_{j=1}^\lambda\alpha_jA_j$ be defined as in Section \ref{subsubsection5.4.2}. According to our experiments in MAGMA, if one of these $A_j$'s is chosen to be invertible, then $M$ is invertible with high probability. For example, let $M=\alpha_1A_1+\alpha_2A_2\in \tl{PC}_{20}(\mathbb{F}_{2^{20}})$, where $A_1\neq A_2$ are nonzero matrices in $\tl{PC}_{20}(\mathbb{F}_2)$. If $A_1$ is chosen to be invertible, none of $10000$ $M$'s turn out to be singular. Otherwise, up to $2547$ out of $10000$ $M$'s turn out to be singular. To efficiently generate the secret $\bm{m}_i$, therefore, we adopt the following procedure.
\begin{itemize}
\item[1.] Randomly choose $\alpha_1,\cdots,\alpha_{\lambda_i}\in\mathbb{F}_{q^n}$ linearly independent over $\mathbb{F}_q$;
\item[2.] Randomly choose $\bm{a}_1,\cdots,\bm{a}_{\lambda_i}\in\mathbb{F}_q^n$ linearly independent over $\mathbb{F}_q$ such that $\tl{PC}_n(\bm{a}_1)\in \tl{GL}_n(\mathbb{F}_q)$;
\item[3.] Compute $\bm{m}_i=\sum_{j=1}^{\lambda_i}\alpha_j\bm{a}_j$ and set $M_i=\tl{PC}_n(\bm{m}_i)$.
\item[4.] Check whether or not $M_i$ is invertible. If not, go back to Step 2.
\end{itemize}

\section{Security analysis}\label{section6}
In code-based cryptography, there are mainly two types of attacks on a cryptosystem, namely the structural attack and generic attack. Structural attacks aim to recover the private key from the published information, with which one can decrypt any ciphertext in polynomial time. Generic attacks aim to recover the plaintext directly without the knowledge of the private key. In what follows, we will investigate the security of our new cryptosystem from these two aspects.

\subsection{Structural attacks}\label{section6.1}
Ever since Gabidulin et al. applied Gabidulin codes to construct public key cryptosystems \cite{gabidulin1991ideals}, many variants based on these codes have been proposed. Unfortunately, most of these schemes were completely broken due to the inherent structural vulnerability of Gabidulin codes.

\subsubsection{Overbeck's attack}
The best known structural attacks on McEliece type variants in the rank metric are the one proposed by Overbeck in \cite{overbeck2008structural} and some of its derivations \cite{horlemann2018extension,Otmani2018Improved}. The principle of these attacks consists in an observation that Gabidulin code contains a large subspace invariant under the Frobenius map. To prevent these attacks, Loidreau \cite{loidreau2017new} proposed a new Gabidulin code based cryptosystem, which can be seen as a rank metric counterpart of the BBCRS cryptosystem \cite{Baldi2016Enhanced} based on GRS codes. In Loidreau's cryptosystem, the secret code is disguised by right multiplying a matrix whose inverse is taken over a $\lambda$-dimensional $\mathbb{F}_q$-subspace of $\mathbb{F}_{q^m}$. This method of hiding information about the private key, as claimed by Loidreau, was able to resist the structural attacks mentioned above. A similar technique is applied in our proposal, which we believe can as well prevent these attacks.

\subsubsection{Coggia-Couvreur attack}\label{section6.1.2}
In \cite{coggia2020security}, Coggia and Couvreur proposed an effective method to distinguish the public code of Loidreau's cryptosystem from general ones, and gave a practical key-recovery attack in the case of $\lambda=2$ and the code rate being greater than $1/2$. Instead of operating the public code directly, Coggia and Couvreur considered the dual of the public code. Specifically, let $G_{pub}=GM^{-1}$ be the public matrix of Loidreau's cryptosystem, where $G$ is a generator matrix of an $[n,k]$ Gabidulin code $\mathcal{G}$ over $\mathbb{F}_{q^m}$ and entries of $M$ are contained in a $\lambda$-dimensional $\mathbb{F}_q$-subspace of $\mathbb{F}_{q^m}$. Denote by $H$ a parity-check matrix of $\mathcal{G}$, then $H_{pub}=HM^T$ forms a parity-check matrix of the public code $\mathcal{G}_{pub}=\langle G_{pub}\rangle_{q^m}$. As for the dual code $\mathcal{G}_{pub}^\perp=\langle H_{pub}\rangle_{q^m}$, the Coggia-Couvreur distinguisher states that the following equality holds with high probability
\[\dim_{q^m}(\mathcal{G}_{pub}^\perp+{\mathcal{G}_{pub}^\perp}^{[1]}+\cdots+{\mathcal{G}_{pub}^\perp}^{[\lambda]})=\min\{n,\lambda(n-k)+\lambda\}.\]
For an $[n,k]$ random linear code $\mathcal{C}_{rand}$ over $\mathbb{F}_{q^m}$, however, the following equality holds with high probability
\[\dim_{q^m}(\mathcal{C}_{rand}^\perp+{\mathcal{C}_{rand}^\perp}^{[1]}+\cdots+{\mathcal{C}_{rand}^\perp}^{[\lambda]})=\min\{n,(\lambda+1)(n-k)\}.\] 
Not long after Coggia and Couvreur's work, this attack was generalized to the case of $\lambda=3$ by Ghatak \cite{2020Extending} and then by Pham and Loidreau \cite{analysis2021pham}.

In our proposal, the public matrix is $G_{pub}=\varphi(GM_1^{-1})M_2^{-1}$. To make it easier, we take the parameter $l=2$ as an example. Because of $\varphi$ being semilinear over $\mathbb{F}_{q^n}$, there exists a linearized permutation polynomial $f(x)=\gamma_1x+\gamma_2x^{[m]}$ with $\gamma_i\in\mathbb{F}_{q^n}^*$ such that $\varphi(\alpha)=f(\alpha)$ for any $\alpha\in\mathbb{F}_{q^n}$. Then 
\[G_{pub}=\varphi(GM_1^{-1})M_2^{-1}=(\gamma_1GM_1^{-1}+\gamma_2G^{[m]}(M_1^{-1})^{[m]})M_2^{-1}.\]
On account of the structure of $G$, there exists $Q\in\tl{PC}_n(\mathbb{F}_q)\cap \tl{GL}_n(\mathbb{F}_q)$ such that $G^{[m]}=GQ$, then $G_{pub}=G(\gamma_1M_1^{-1}+\gamma_2Q(M_1^{-1})^{[m]})M_2^{-1}$. Let $G_{pub}=GM^{-1}$, where $M=M'_1M_2$ and
\begin{align*}
M'_1=(\gamma_1M_1^{-1}+\gamma_2Q(M_1^{-1})^{[m]})^{-1}=M_1M_1^{[m]}(\gamma_2QM_1+\gamma_1M_1^{[m]})^{-1}.
\end{align*}
Notice that $\bm{g}$ is public, one can recover $M$ directly by computing $\tl{PC}_n(\bm{g})^{-1}\tl{PC}_n(\bm{g}^*)$. However, it does not mean one can decrypt a ciphertext with the knowledge of $G$ and $M$. This is because $M$ seems quite random and the value of $\tl{wt}_R(M)$ can be very large. For instance, we have run 1000 random tests for $q=2,m=30,n=60$ and $\lambda_1=\lambda_2=2$. It turned out that $\tl{wt}_R(M)\geqslant 54$ for all these tests and the values of $\tl{wt}_R(M)$ occuring most frequently are $58$ for $233$ times, $59$ for $493$ times and $60$ for $210$ times respectively. Consequently, the dual of the public code $\mathcal{G}_{pub}=\langle G_{pub}\rangle_{q^n}$ turns out to be indistinguishable from random codes. To be exact, the following equality holds with high probability according to our experimental results,
\[\dim_{q^m}(\mathcal{G}_{pub}^\perp+{\mathcal{G}_{pub}^\perp}^{[1]}+\cdots+{\mathcal{G}_{pub}^\perp}^{[\lambda]})=\min\{n,(\lambda+1)(n-k)\}.\]
This convinces us that our proposal can prevent Coggia-Couvreur attack.

\subsubsection{Loidreau's attack}
In a talk \cite{loidreau2021analysing} at CBCrypto 2021, Loidreau proposed an attack to recover a decoder of the public code in Loidreau's cryptosystem with a complexity of $\mathcal{O}(((\lambda n+(n-k)^2)m)^\omega q^{(\lambda-1)m})$. With this decoder one can decrypt any ciphertext in polynomial time. Similar to Coggia-Couvreur attack, this attack also considers the dual of the public code. However, an applicable condition for this attack is that the public matrix can be decomposed as $G_{pub}=GM^{-1}$, where $G$ is a generator matrix of a Gabidulin code or one of its subcodes and entries of $M$ are contained in a small $\mathbb{F}_q$-subspace of $\mathbb{F}_{q^n}$. Obviously the public matrix in our proposal does not satisfy this condition according to the analysis in Section \ref{section6.1.2}, which implies that this attack does not work on our new cryptosystem.

\subsubsection{A brute-force attack}\label{section6.1.4}
Now we consider a potential brute-force attack against the duple $(\overline{\varphi},\overline{\bm{m}}_1)$. Notice that for any $\varphi'\in\overline{\varphi}$ and $\bm{m}'_1\in\overline{\bm{m}}_1$, there exists $\bm{m}'_2\in\mathbb{F}_{q^n}^n$ with $\tl{wt}_R(\bm{m}'_2)=\lambda_2$ such that $G_{pub}=\varphi(G{M_1}^{-1})M_2^{-1}=\varphi'(G{M'_1}^{-1}){M'_2}^{-1}$, where $M'_1= \tl{PC}_n(\bm{m}'_1)$ and $M'_2= \tl{PC}_n(\bm{m}'_2)$. Let $G'_{pub}= \tl{PC}_n(\bm{g}^*)$ and $G'= \tl{PC}_n(\bm{g})$, then 
\[G'_{pub}=\varphi(G'M_1^{-1})M_2^{-1}=\varphi'(G'{M'_1}^{-1}){M'_2}^{-1}.\]
This implies that one can compute $M'_2={G'_{pub}}^{-1}\varphi'(G'{M'_1}^{-1})$. Furthermore, a straightforward verification shows that one can decrypt any ciphertext with the knowledge of $\varphi',\bm{m}'_1, \bm{m}'_2$ and the public $\bm{g}$. Apparently the complexity of this brute-force attack by exhausting $(\overline{\varphi},\overline{\bm{m}}_1)$ is $\mathcal{O}(\mathcal{N}(\overline{\varphi})\cdot\mathcal{N}(\overline{\bm{m}}_1))$.

\subsection{Generic attacks}
A legitimate message receiver can always recover the plaintext in polynomial time, while an adversary without the private key has to deal with the underlying RSD problem presented in Section \ref{section3}. Attacks that aim to recover the plaintext directly by solving the RSD problem are called generic attacks, the complexity of which only relates to the parameters of the cryptosystem. In what follows, we will show how to establish a connection between our proposal and the RSD problem.

Let $G_{pub}=\varphi(GM_1^{-1})M_2^{-1}\in\mathcal{M}_{k,n}(\mathbb{F}_{q^n})$ be the public matrix, and $H_{pub}\in\mathcal{M}_{n-k,n}(\mathbb{F}_{q^n})$ a parity-check matrix of the public code $\mathcal{G}_{pub}=\langle G_{pub}\rangle_{q^n}$. Let $\bm{y}=\bm{x}G_{pub}+\bm{e}$ be the received ciphertext, then the syndrome of $\bm{y}$ with respect to $H_{pub}$ can be computed as $\bm{s}=\bm{y}H_{pub}^T=\bm{e}H_{pub}^T$. By Definition \ref{definition}, one obtains an RSD instance of parameters $(q,n,n,k,t)$. Solving this RSD instance by the combinatorial attacks listed in Table \ref{table1} or the algebraic attacks listed in Table \ref{table2} will lead to the error vector $\bm{e}$, then one can recover the plaintext by solving the linear system $\bm{y}-\bm{e}=\bm{x}G_{pub}$.

\section{Parameters and public key sizes}\label{section7}
In this section, we consider the practical security of our proposal against the generic attacks presented in Section \ref{section3}, as well as a brute-force attack against the duple $(\overline{\varphi},\overline{\bm{m}}_1)$ as described in Section \ref{section6.1.4}, with a complexity of $\mathcal{O}(\mathcal{N}(\overline{\varphi})\cdot\mathcal{N}(\overline{\bm{m}}_1))$. The public key in our proposal is a vector in $\mathbb{F}_{q^n}^n$, leading to a public key size of $n^2\cdot \log_2(q)$ bits.  In Table \ref{table3}, we give some suggested parameters for the security level of at least 128 bits, 192 bits, and 256 bits. After that, we make a comparison on public key size with some other code-based cryptosystems in Table \ref{table4}. It is easy to see that our proposal has an obvious advantage over other variants in public key representation.

\begin{table}[!h]
\setlength{\abovecaptionskip}{-0.1cm}
\setlength{\belowcaptionskip}{-0.2cm}
\begin{center}
\begin{tabular}{|ccccccc|c|c|}
\hline
\multicolumn{7}{|c|}{\makecell*[c]{Parameters}} & \multirowcell{2}{\makecell*[c]{Public Key Size}} & \multirowcell{2}{\makecell*[c]{Security}}\\
\cline{1-7}
\makecell*[c]{$q$} & $m$ & $n$ & $k$ & $l$ & $\lambda_1$ & $\lambda_2$ &&\\
\hline
2&55&110&54&2&2&2& 1513 &\makecell*[c]{138}\\
\hline
2&60&120&64&2&2&2& 1800 &\makecell*[c]{197}\\
\hline
2&72&144&72&2&2&2& 2592 &\makecell*[c]{257}\\
\hline
\end{tabular}
\end{center}
\caption{Parameters and public key size (in bytes).}\label{table3}
\end{table}

\begin{table}[!h]
\setlength{\abovecaptionskip}{-0.1cm}
\setlength{\belowcaptionskip}{-0.2cm}
\begin{center}
\begin{tabular}{|l|r|r|r|}
\hline
\backslashbox{Instance}{Security} & \makecell*[c]{128} & \makecell*[c]{192} & \makecell*[c]{256}\\
\hline\rule{0pt}{10pt}
\makecell*[c]{Classic McEliece \cite{daniel2020classic}}& 261120 & 524160 & 1044992\\
\hline\rule{0pt}{10pt}
\makecell*[c]{NTS-KEM \cite{ntskem2019}}& 319488 & 929760 & 1419704\\
\hline\rule{0pt}{10pt}
\makecell*[c]{Guo-Fu II \cite{guo2021expanded}}& 79358 & 212768 & 393422\\
\hline\rule{0pt}{10pt}
\makecell*[c]{Guo-Fu I \cite{guo2021expanded}}& 8993 & 18359 & 37583\\
\hline\rule{0pt}{10pt}
\makecell*[c]{HQC \cite{aguilar2020}}& 2249 & 4522 & 7245\\
\hline\rule{0pt}{10pt}
\makecell*[c]{BIKE \cite{aragon2020}}& 1540 & 3082 & 5121\\
\hline\rule{0pt}{10pt}
\makecell*[c]{Lau-Tan \cite{lau2019new}}&2421&3283& 4409\\
\hline\rule{0pt}{10pt}
\makecell*[c]{Our proposal} & 1513 & 1800 & 2592\\
\hline
\end{tabular}
\end{center}
\caption{Comparison on public key size (in bytes).}\label{table4}
\end{table}

\section{Conclusion}\label{section8}
We have introduced a novel transformation in coding theory, which is defined as an $\mathbb{F}_q$-linear automorphism of $\mathbb{F}_{q^m}$. According to their properites when acting on linear codes over $\mathbb{F}_{q^m}$, these transformations are divided into two categories, namely the fully linear transformation and the semilinear transformation. As an application of semilinear transformations, a new technique is developed to conceal the secret information in code-based cryptosystems. To obtain a small public key size, we exploit the so-called partial cyclic Gabidulin code to construct an encryption scheme, whose security does not rely on the confidentiality of the underlying Gabidulin code. According to our analysis, both the public code and its dual seem indistinguishable from random codes and can therefore resist the existing structural attacks. Furthermore, the system also admits a much smaller public key size compared to some other code-based cryptosystems. For instance, 2592 bytes are enough for our proposal to achieve the security of 256 bits, 403 times smaller than that of Classic McEliece moving onto the third round of the NIST PQC standardization process

\begin{acknowledgements}
This research is supported by the National Key Research and Development Program of China (Grant No. 2018YFA0704703), the National Natural Science Foundation of China (Grant No. 61971243), the Natural Science Foundation of Tianjin (20JCZDJC00610), and the Fundamental Research Funds for the Central Universities of China (Nankai University).
\end{acknowledgements}

\end{document}